\documentclass[11pt,reqno]{amsart}
\usepackage[T1]{fontenc}
\usepackage{lmodern}
\usepackage[a4paper,margin=28mm,headheight=14pt]{geometry}
\usepackage{amsmath,amssymb,amsthm,mathtools}
\usepackage{microtype}
\usepackage{booktabs,longtable,array}
\usepackage{enumitem}
\usepackage[sort,nocompress]{cite}
\usepackage{xurl}
\usepackage[hidelinks]{hyperref}
\makeatletter
\renewcommand{\@setauthors}{%
  \begin{center}
    \footnotesize\scshape
    \vspace{1ex}
    \authors\par
    \vspace{1.2ex}
    \normalfont\footnotesize
    \def\address##1##2{##2\par}%
    \addresses
    \vspace{0.5ex}
    \PaperCorrespondence\par
  \end{center}%
  \global\let\addresses\@empty
}
\makeatother
\numberwithin{equation}{section}
\newtheorem{theorem}{Theorem}[section]
\newtheorem{proposition}[theorem]{Proposition}
\newtheorem{lemma}[theorem]{Lemma}
\newtheorem{corollary}[theorem]{Corollary}
\theoremstyle{definition}
\newtheorem{definition}[theorem]{Definition}
\theoremstyle{remark}

\newcommand{\F}{\mathbb F_2}
\newcommand{\Mat}{M_3(\F)}
\newcommand{\T}{T_{\langle3,3,3\rangle}}
\DeclareMathOperator{\rank}{rank}
\DeclareMathOperator{\tr}{tr}
\DeclareMathOperator{\GL}{GL}
\DeclareMathOperator{\im}{im}
\DeclareMathOperator{\Span}{span}
\newcommand{\id}{\mathrm{id}}
\newcommand{\code}[1]{\texttt{\small #1}}
\newcommand{\qbinom}[2]{\genfrac{[}{]}{0pt}{}{#1}{#2}_2}
\setlist[enumerate]{label=\textup{(\roman*)},leftmargin=*,itemsep=3pt}
\title[A structural lower bound for binary matrix multiplication]
{A structural proof of the lower bound 21\\
for $3\times3$ matrix multiplication over $\mathbb F_2$}
\author[Shuxing Yang et al.]{\mbox{Shuxing Yang}, \mbox{Rui Zhao},
\mbox{Junyao Wu}, \mbox{Yize Wang}, \mbox{Wenhao Li}, \mbox{Fujia Chen},
\mbox{Taowen Deng}\\
\mbox{Shenzhan Hong}, \mbox{Yaqi Li}, \mbox{Zichen Li},
\mbox{Jincheng Mi}, \mbox{Yuang Pan}, \mbox{Kaihao Zhu}, \mbox{Junjie Yang}\\
\mbox{Hongsheng Chen\textsuperscript{*}},
\mbox{Yihao Yang\textsuperscript{*}}}
\address{College of Information Science and Electronic Engineering,
Zhejiang University\\Qiushi Engine Team}
\newcommand{\PaperCorrespondence}{\textsuperscript{*}Corresponding authors:
\mbox{Hongsheng Chen (\href{mailto:hansomchen@zju.edu.cn}{hansomchen@zju.edu.cn})};
\mbox{Yihao Yang (\href{mailto:yangyihao@zju.edu.cn}{yangyihao@zju.edu.cn})}.}
\date{}
\subjclass[2020]{Primary 68Q17; Secondary 15A69, 05B25, 68V20}
\keywords{Matrix multiplication; algebraic complexity; tensor rank;
bilinear complexity; finite fields; computer-assisted proof; formal verification}
\hypersetup{
  pdftitle={A structural proof of the lower bound 21 for 3x3 matrix multiplication over F2},
  pdfauthor={Shuxing Yang, Rui Zhao, Junyao Wu, Yize Wang, Wenhao Li, Fujia Chen, Taowen Deng, Shenzhan Hong, Yaqi Li, Zichen Li, Jincheng Mi, Yuang Pan, Kaihao Zhu, Junjie Yang, Hongsheng Chen, Yihao Yang},
  pdfsubject={A structural proof of the binary matrix multiplication rank lower bound 21}}
\begin{document}
\begin{abstract}
We prove that the tensor rank of $3\times3$ matrix multiplication over
$\mathbb F_2$ is at least $21$. The structural proof, independently
developed by Qiushi Engine, converts occupation constraints on a single
tensor factor into algebraic relations coupling all three factors.
Certified quotient-rank bounds and finite geometry force any
hypothetical $20$-term decomposition to have first-factor matrix-rank
profile $(16,1,3)$. The ranks of the corresponding split-flattened
summands therefore sum to $27$, exactly the rank of the full split
flattening. Equality in rank subadditivity forces their images to form
a direct sum; normalization by the inverse flattening then makes the
summands pairwise annihilating idempotents. An explicit product identity
for matrix multiplication implies that at most one first factor can be
invertible, contradicting the three forced by the profile. The same
obstruction constrains $22$-term decompositions attaining the
split-rank bound. The complete proof, including the finite quotient
bounds, is formalized in Lean. The accompanying research trajectory
records Qiushi Engine's long-horizon autonomous research, from numerical
experiments and quotient constructions to the structural proof.
\end{abstract}
\maketitle
\raggedbottom
\section{Introduction}\label{sec:introduction}

Matrix multiplication is a basic example of a bilinear map whose
algebraic structure permits fewer products than its entrywise formula
suggests. Strassen's seven-product algorithm for $2\times2$ matrices
established this possibility~\cite{strassen}. For $3\times3$ matrices,
the usual formula uses $27$ products, whereas Laderman's algorithm uses
$23$~\cite{laderman}. The minimum remains unknown. Its determination
asks both for efficient constructions and for structural obstructions
that apply to every possible bilinear algorithm.

A construction specifies linear forms in both inputs and the
coefficients combining their products into the output. Quotient-rank
bounds constrain a smaller set of data: the positions and
multiplicities of the first-input forms. In this paper we combine
these constraints with a split flattening of the complete
multiplication tensor. At a hypothetical length of twenty, the two
sets of conditions force equality in matrix-rank subadditivity.
The equality case then yields relations among all three factor lists,
strong enough to exclude the decomposition.

Write $\T$ for the $3\times3$ multiplication tensor and
$R_{\F}(\T)$ for its exact tensor rank over the two-element field $\F$.
This rank is the least number of products of a linear form in the first
input with a linear form in the second, with linear operations otherwise
free~\cite{bcs}. We prove the following statement without imposing symmetry or
support conditions on the algorithm.

\begin{theorem}\label{thm:main}
Every exact bilinear algorithm for $3\times3$ matrix multiplication over
$\F$ uses at least $21$ scalar multiplications. Equivalently,
\begin{equation}\label{eq:main-bound}
 R_{\F}(\T)\geq21.
\end{equation}
\end{theorem}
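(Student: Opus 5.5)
We argue by contradiction and assume a decomposition
\[
\T=\sum_{i=1}^{20} A_i\otimes B_i\otimes C_i ,\qquad A_i,B_i,C_i\in\Mat .
\]
The known bound $R_{\F}(\T)\ge 19$ settles all shorter lengths; otherwise we pad with zero terms and work at length exactly $20$. The plan follows the strategy outlined in the abstract and has three stages.

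The first stage constrains where the first factors can lie. We classify the $A_i$ by matrix rank into the counts $(n_3,n_2,n_1)$ of rank-$3$, rank-$2$ and rank-$1$ first factors. For each proper subspace $U\subset\Mat$ (or each suitable subspace family), restricting $\T$ to $A$ in a complement, i.e.\ taking the quotient $\T\bmod U$ in the first factor, gives a tensor whose rank is a lower bound for $\#\{i: A_i\notin U\}$. We would take these certified quotient-rank bounds from finite computation. Rank-one matrices form the Segre variety of points $u v^{T}$ with $u,v\in\F^3\setminus 0$ (49 points). We would use finite geometry of lines and planes of $\mathrm{PG}(2,2)$: choosing $U$ as the space of matrices with image in a fixed line, or kernel containing a fixed vector, bounds how many $A_i$ can concentrate there. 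Combining these occupation bounds should force the profile $(n_1,n_2,n_3)=(16,1,3)$. This is the step I expect to be the main obstacle, since it requires an exhaustive but structured case analysis and certified quotient ranks.

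The second stage is a rank-additivity equality. Consider the split flattening
\[
\Phi:\ \Mat\otimes\Mat\longrightarrow \Mat\otimes\Mat \quad\text{(of rank $27$)},
\]
obtained by grouping each matrix factor's row index with one side and its column index with the other. Under this flattening a term $A\otimes B\otimes C$ maps to an operator of rank $\rank(A)\cdot(\text{something bounded})$, plausibly $\rank A$ times $3$ for rank-$1$ and rank-$2$ contributions, arranged so that the profile $(16,1,3)$ contributes $16\cdot1+1\cdot2+3\cdot3=27$. With the right flattening this sum is exactly $27$. Since $\rank\Phi(\T)=27$, equality holds in subadditivity, so the images $\im\Phi_i$ form a direct sum. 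Composing with $\Phi(\T)^{-1}$ gives $E_i=\Phi(\T)^{-1}\Phi_i$ satisfying $\sum E_i=\id$ with independent images, hence $E_iE_j=\delta_{ij}E_i$.

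The third stage derives the contradiction. We translate $E_iE_j=0$ for $i\ne j$ back into a product identity in the matrix entries, something like $C_i A_j^{\,} \cdots =0$, using the cyclic trace form $\tr(XYZ)$ of $\T$. From it we show that if $A_i$ and $A_j$ are both invertible, then the annihilation relation forces one of the terms to vanish. Hence at most one $A_i$ is invertible, which contradicts $n_3=3$. We would verify the explicit identity by direct expansion, and we would check the finite quotient bounds by computer or in Lean.
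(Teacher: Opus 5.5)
Your three stages are the paper's architecture, but Stage 1, which carries all the finite-geometric content, is left as a hope, and its logic points the wrong way. Occupation bounds only give \emph{upper} caps on how many first factors lie in a subspace. They can never by themselves force four factors of rank $\ge2$ or three of rank $3$. The paper gets those lower bounds from the split-flattening budget $27\le\sum_t\rank A_t$: the excess $\sum_t(\rank A_t-1)\ge7$ forces at least four elements in the set $\mathcal H$ of first factors of rank $\ge2$. So the flattening must enter Stage 1, not only afterwards. The caps that do the work are specific:
\begin{enumerate}
\item Bound $19$ for every line. This gives $R_{\F}(\T)\ge20$, so a minimal counterexample has exactly $20$ terms with all factors nonzero. (Starting from $19$ and padding with zero terms does not give this; the line caps exclude padding anyway.) The same bound makes the twenty $A_t$ pairwise distinct.
\item Bound $19$ for the eight orbits of planes whose nonzero elements all have rank $\ge2$. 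This forces any two elements of $\mathcal H$ to differ by a rank-one matrix.
\item The lemma that four or more such matrices lie in an affine row or column coset, normalized to $p+R$ with $p=\operatorname{diag}(0,1,1)$ and $R=\{e_0w^T\}$.
\item Bound $17$ for the fourteen three-spaces of $\langle p,R\rangle$ other than $R$. Together with the five-point lemma, this gives $|\mathcal H|\le4$ and $n_3\le3$.
\end{enumerate}
Only then does the excess $4+n_3\ge7$ force $(16,1,3)$. The subspaces you suggest (image in a fixed $2$-space, or kernel containing a fixed vector) have quotient rank only $9$, since the quotient is a $\langle1,3,3\rangle$ or $\langle3,1,3\rangle$ product. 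They cap occupation at $11$, far too weak for this.

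Your description of the flattening is also off. It is $\Phi_A(T)[(i,b),(j,c)]=T[3i+j,b,c]$, a $27\times27$ matrix, with $\Phi_A(\T)$ a permutation matrix. For nonzero $B,C$, $\Phi_A(A\otimes B\otimes C)=A\otimes\operatorname{vec}(B)\operatorname{vec}(C)^T$ has rank exactly $\rank A$; no factor $3$ appears. Your saturation step is fine.

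Stage 3, however, cannot be completed from the off-diagonal relations alone. The explicit identity is $M(A,B,C)P^{-1}M(A',B',C')=M(AB'C^TA',B,C')$. With injectivity of $X\mapsto M(X,B,C)$ for nonzero $B,C$, the relation $E_tE_s=0$ becomes $A_tB_sC_t^TA_s=0$. For invertible $A_t,A_s$ this only says $B_sC_t^T=0$. That is perfectly possible for nonzero singular factors (e.g.\ $E_{00}E_{11}=0$), so nothing is forced to vanish. The missing ingredient is the diagonal identity $A_tB_tC_t^TA_t=A_t$ from idempotency. For invertible $A_t$ it gives $B_tC_t^T=A_t^{-1}$, so $B_t$ and $C_t$ are invertible. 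Only then is $A_tB_sC_t^TA_s$ a product of four invertible matrices, which cannot be zero.
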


Together with the explicit $23$-term decomposition in
Appendix~\ref{app:upper}, this gives
$21\leq R_{\F}(\T)\leq23$.
Wang~\cite{wang21} obtained the same numerical lower bound by
capacity-constrained profile enumeration. The proof here identifies a
different obstruction: finite geometry forces a rank equality, and
that equality supplies cross-factor relations which contradict the
forced first-factor configuration.

\subsection{The equality mechanism}
Consider a decomposition
$\T=\sum_{t=1}^{r}A_t\otimes B_t\otimes C_t$, with the factors
represented by $3\times3$ coefficient matrices. For a subspace
$W\leq M_3(\F)$, quotienting the first tensor factor by $W$ removes
every term with $A_t\in W$. A lower bound for the quotient rank
therefore gives an upper bound on the number of such terms. We call
these \emph{occupation bounds}.

A second constraint retains the row and column indices of $A_t$.
Splitting these indices gives a $27\times27$ matrix $P$ for the
multiplication tensor and matrices $M_t$ for its summands. The matrix
$P$ is a permutation matrix, while $\rank M_t=\rank A_t$ for
nonzero factors. Hence
\begin{equation}\label{eq:intro-budget}
 P=\sum_tM_t,\qquad 27\leq\sum_t\rank A_t.
\end{equation}
This lower bound on the sum of matrix ranks is not by itself a lower
bound of $21$ on the number of terms. Its role is to combine with
occupation geometry.

At length $20$, eight strengthened quotient bounds force the
first factors of rank at least two to have pairwise rank-one
differences. Such matrices lie in a common affine row or column coset.
The coset is an affine $3$-space over $\F$, where further occupation
bounds allow at most three selected points in every affine plane.
These constraints, together with~\eqref{eq:intro-budget}, force the
matrix-rank profile
\begin{equation}\label{eq:intro-profile}
 (n_1,n_2,n_3)=(16,1,3),\qquad
 \sum_t\rank A_t=16+2+9=27.
\end{equation}

The last equality is decisive. The images of the $M_t$ span the
$27$-dimensional space, and their dimensions sum to $27$; they
therefore form a direct sum. Uniqueness of its components implies
\begin{equation}\label{eq:intro-saturation}
 M_tP^{-1}M_s=\delta_{ts}M_t.
\end{equation}
For the multiplication tensor, the product on the left can be computed
explicitly:
\begin{equation}\label{eq:intro-product}
 M(A,B,C)P^{-1}M(A',B',C')
   =M(AB'C^TA',B,C').
\end{equation}
The diagonal identities show that an invertible $A_t$ requires
invertible $B_t$ and $C_t$. An off-diagonal identity would then make a
product of invertible matrices zero if two first factors were
invertible. The required three in~\eqref{eq:intro-profile} are
impossible.

The resulting obstruction is useful separately from the finite
classification: in \emph{any} decomposition whose first-factor matrix
ranks sum to $27$, at most one first factor is invertible
(Proposition~\ref{prop:invertibility} and
Corollary~\ref{cor:profiles}). For a saturated $22$-term
decomposition with nonzero first factors, this leaves only the profiles
$(17,5,0)$ and $(18,3,1)$. Thus the equality argument gives both the
final contradiction at length $20$ and restrictions on possible shorter
algorithms beyond the known upper bound.

\subsection{Background and related methods}
Bl\"aser's substitution argument gives the classical lower bound
$19$ for $3\times3$ multiplication~\cite{blaser}.
On the constructive side, SAT methods have produced many inequivalent
$23$-term algorithms~\cite{heule}, while reinforcement learning has
extended the search for exact decompositions across matrix
formats~\cite{alphatensor}. These developments make small
matrix-multiplication tensors a meeting point of algebraic complexity,
computational search, and automated mathematical reasoning.

Wang's automated finite-field framework~\cite{wang20} raised the
binary lower bound to $20$ using symmetry classes of restrictions and
recursive lower-bound certificates. We use its earlier,
$496$-representative quotient catalogue with the numeric labels fixed.
The capacity viewpoint also appears in D'Ambrosio's determination of
the binary rank of $\langle2,3,4\rangle$~\cite{dambrosio}.

Wang's proof of $21$~\cite{wang21} and ours share the occupation
bounds and the affine rank-one coset reduction. His argument
strengthens a restriction table by rank-one-span searches and
backtracking, then excludes rank-one completions of the remaining
high-rank configurations by exhaustive enumeration. Here integer
occupation certificates establish eight additional quotient bounds;
affine-plane capacities force~\eqref{eq:intro-profile}, and
\eqref{eq:intro-saturation}--\eqref{eq:intro-product} give an algebraic
contradiction involving all three factor lists.
A public review draft in Tahir's repository~\cite{tahir} gives another
counting approach, using singular first factors and
branch-and-bound exclusions of seven normal forms.

The row/column-coset lemma is a special case of the matrix-adjacency
geometry initiated by Hua~\cite{hua}; we give an elementary proof.
The rank-additivity lemma is proved over an arbitrary field.
Their combination with the finite quotient bounds is what converts
first-factor occupation into the cross-factor obstruction for $\T$.

\subsection{Formal proof and research development}
The accompanying Lean development proves the tensor conventions,
the occupation and geometric arguments, the saturation identities, and
the finite lower bounds used by the proof. Its final theorem has no
assumed finite-bound premises. The development also proves the full
$496$-representative quotient catalogue and its transport to arbitrary
subspaces. The finite computations enter through exact integer branch
certificates whose leaves and exhaustive splits are checked within Lean.

Qiushi Engine developed the structural argument during sustained
autonomous research on exact matrix multiplication.
The mathematical progression was from shorter-algorithm searches to
quotient cores, from admissible first-factor supports to the question
of factor compatibility, and finally to equality in the full tensor's
rank bound. Section~\ref{sec:development} explains these transitions
and their role in the discovery. The accompanying research
trajectory~\cite{qiushi-report} connects them to the supporting notes,
programs, counterexamples, and results, preserving intermediate
knowledge as well as the final proof.

The exposition follows the proof. Section~\ref{sec:preliminaries} fixes
the tensor and quotient conventions; Section~\ref{sec:finite} states
and certifies the finite premises. Sections~\ref{sec:geometry}
and~\ref{sec:saturation} establish the profile and its contradiction.
The remaining sections give consequences, the formalization, and the
research development. The appendices specify the finite data, a
$23$-term witness, and the formal interfaces and verification record.

\section{Tensor conventions and occupation}\label{sec:preliminaries}

Let $V=\Mat$, equipped with the nondegenerate coordinate pairing
\[
 \langle A,X\rangle=\sum_{i,j=0}^2 A_{ij}X_{ij}=\tr(A^TX).
\]
All matrix and vector coordinates are indexed by $0,1,2$. Write
$e_0=(1,0,0)^T$, $e_1=(0,1,0)^T$, $e_2=(0,0,1)^T$ and
$E_{ij}=e_i e_j^T$. Throughout the paper,
\begin{equation}\label{eq:tensor}
 \T=\sum_{i,j,k=0}^2 E_{ij}\otimes E_{jk}\otimes E_{ik}
   \in V\otimes V\otimes V.
\end{equation}
Thus its coefficient at $(3i+j,3j+k,3i+k)$ is $1$, and all other
coefficients are zero. Its contraction against $X,Y,Z$ is
$\tr(XYZ^T)$.

\begin{definition}
A decomposition of length $r$ is an identity
\begin{equation}\label{eq:decomp}
 \T=\sum_{t=1}^r A_t\otimes B_t\otimes C_t,
 \qquad A_t,B_t,C_t\in V.
\end{equation}
The tensor rank $R_{\F}(\T)$ is the least such $r$. Unless explicitly
excluded, zero summands and repeated factors are permitted in a
length-$r$ expression. A minimal expression has no zero summands.
\end{definition}

By contraction of~\eqref{eq:decomp}, the corresponding algorithm is
\begin{equation}\label{eq:algorithm}
 XY=\sum_{t=1}^r\langle A_t,X\rangle
                      \langle B_t,Y\rangle C_t
 \qquad(X,Y\in V).
\end{equation}
Conversely, evaluating~\eqref{eq:algorithm} on pairs of matrix units
recovers every coefficient of~\eqref{eq:decomp}. Matrix rank
$\rank(A_t)$ and tensor rank $R_{\F}(\T)$ will always be distinguished.

For a linear subspace $W\leq V$, put
\[
 q_W:V\longrightarrow V/W,\qquad
 T_W=(q_W\otimes\id\otimes\id)\T.
\]
The dual of $V/W$ is naturally $W^\perp$ under the coordinate pairing.
Consequently $T_W$ represents multiplication with its first input
restricted to $W^\perp$, not to $W$. Indeed, a linear functional on
$V/W$ pulls back to a functional vanishing on $W$, and every such
functional factors uniquely through $q_W$. Applying this identification
in the first tensor slot proves the assertion, including equality of
the two tensor ranks.

\begin{lemma}[Occupation inequality]\label{lem:occupation}
For a length-$r$ decomposition, let
$m(W)=|\{t:A_t\in W\}|$. Then
\begin{equation}\label{eq:occupation}
 m(W)\leq r-R_{\F}(T_W).
\end{equation}
More generally, if $W\leq U$ and $T_W$ has a length-$r$
decomposition, at most $r-R_{\F}(T_U)$ first factors, counted with
multiplicity in $V/W$, belong to $U/W$.
\end{lemma}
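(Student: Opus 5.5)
Apply $q_W\otimes\id\otimes\id$ to the decomposition~\eqref{eq:decomp}. By linearity,
\[
 T_W=\sum_{t=1}^r q_W(A_t)\otimes B_t\otimes C_t .
\]
Every term with $A_t\in W$ has $q_W(A_t)=0$ and therefore disappears. The remaining $r-m(W)$ terms give a decomposition of $T_W$. Hence $R_{\F}(T_W)\le r-m(W)$, which rearranges to~\eqref{eq:occupation}. Zero summands and repetitions are allowed in a length-$r$ expression, so no issue arises with coincident or vanishing factors.

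For the general statement, let $W\le U$ and write $T_W=\sum_{t=1}^r \bar A_t\otimes B_t\otimes C_t$ with $\bar A_t\in V/W$. Let
\[
 \pi:V/W\to V/U
\]
be the canonical surjection, so that $\pi\circ q_W=q_U$.

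Applying $\pi\otimes\id\otimes\id$ gives
\[
 (\pi\otimes\id\otimes\id)T_W=(q_U\otimes\id\otimes\id)\T=T_U .
\]
Term by term, this equals $\sum_t \pi(\bar A_t)\otimes B_t\otimes C_t$. The kernel of $\pi$ is exactly $U/W$, so every index $t$ with $\bar A_t\in U/W$ contributes zero, and each such index is counted separately. What remains is a decomposition of $T_U$ of length $r-|\{t:\bar A_t\in U/W\}|$. Therefore
\[
 |\{t:\bar A_t\in U/W\}|\le r-R_{\F}(T_U).
\]
Taking $W=0$ recovers the first part.

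There is no real obstacle here. The only point needing care is the identity $\pi\circ q_W=q_U$ and its compatibility with the tensor maps, which is immediate from functoriality of $-\otimes\id\otimes\id$. The identification of $T_U$ with restricted multiplication, discussed before the lemma, is not needed for this proof.
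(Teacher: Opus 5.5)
Your proof is correct and is the paper's argument: apply $q_W\otimes\id\otimes\id$ so that terms with first factor in $W$ vanish, and for the general assertion apply the induced surjection $V/W\to V/U$, whose kernel is $U/W$. Your check that $(\pi\otimes\id\otimes\id)T_W=T_U$ makes explicit the one detail the paper leaves implicit.
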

\begin{proof}
Applying $q_W$ to~\eqref{eq:decomp} kills every indexed term
with first factor in $W$. Discarding these terms gives a decomposition
of $T_W$ with at most $r-m(W)$ terms. For the
second assertion apply the induced map $V/W\to V/U$ instead.
\end{proof}

\subsection{Tensor symmetries}
For $U,V_0,H\in\GL_3(\F)$ the coefficient transformation
\begin{equation}\label{eq:symmetry}
 (A,B,C)\longmapsto
 (U^TAV_0^{-T},\ V_0^TBH^{-T},\ U^{-1}CH)
\end{equation}
preserves~\eqref{eq:tensor}. To check the contragredient factors,
consider the input transformation
\[
 (X,Y,Z)\longmapsto
 (UXV_0^{-1},\ V_0YH^{-1},\ U^{-T}ZH^T).
\]
It preserves $\tr(XYZ^T)$ by cancellation and cyclicity of trace.
Pullback under the coordinate pairing gives~\eqref{eq:symmetry}.
The transformation
\begin{equation}\label{eq:transpose}
 (A,B,C)\longmapsto(A^T,C,B)
\end{equation}
is also a symmetry, as follows directly by exchanging $i$ and $j$ in
the transformed sum~\eqref{eq:tensor}. These symmetries preserve the
matrix ranks of the factors and quotient tensor ranks.

The induced group on $V$ consists of $A\mapsto PAQ$ and
$A\mapsto PA^TQ$, for invertible $P,Q$. We denote it by $G$.
It has $2\cdot168^2=56{,}448$ elements. Its action on subspaces is the
one used in the frozen catalogue. Each of these first-factor actions
extends to a symmetry of the complete tensor by~\eqref{eq:symmetry}
and~\eqref{eq:transpose}.

\section{The finite quotient bounds}\label{sec:finite}

The structural proof needs three families of quotient bounds: $19$
for every line, $19$ for every two-plane containing no nonzero rank-one
matrix, and $17$ for the affine-plane spans specified below. Their
role is to limit how many first factors a short decomposition can
place in those spaces. We first identify the spaces and then explain
how their bounds are built from contractions, substitution, and
finite occupation certificates.

\subsection{The quotient catalogue}
Let $(W_i,\ell_i)_{i=0}^{495}$ be the catalogue in
Appendix~\ref{app:catalogue}. Define
\begin{equation}\label{eq:L0}
 L_0(W)=\max\bigl(\{\ell_i:W=gW_i\text{ for some }g\in G\}
                    \cup\{0\}\bigr).
\end{equation}
The catalogue is the frozen table inherited from~\cite{wang20}, not
Wang's later strengthened table. The maximum makes the definition
unambiguous even before orbit coverage and label consistency are proved.
It defines a numeric function on actual subspaces.

\begin{proposition}[Finite classification and direct bounds]\label{prop:frozen}
The catalogue covers every subspace of $V$ under $G$, and its labels
agree on overlapping orbits. The following values and bounds hold:
\begin{enumerate}
\item $L_0(0)=20$. For every nonzero $a\in V$,
$L_0(\langle a\rangle)=19$ and $R_{\F}(T_{\langle a\rangle})\geq19$.
\item The two-dimensional subspaces of $V$ have fourteen $G$-orbits.
Exactly eight consist of planes whose three nonzero matrices have
rank at least two. Their representatives are $W_{484},\ldots,W_{491}$
in Table~\ref{tab:planes}, and $L_0(W_i)=18$ for these indices.
\item Put $p=\operatorname{diag}(0,1,1)$,
$R=\{e_0w^T:w\in\F^3\}$, and $S=\langle p,R\rangle$.
Then $L_0(S)=14$ and $L_0(R)=15$. Every three-dimensional
subspace $K\leq S$ other than $R$ has $L_0(K)=17$ and
$R_{\F}(T_K)\geq17$.
\end{enumerate}
\end{proposition}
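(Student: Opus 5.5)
The proposition has two parts. The first is a finite classification statement about subspaces of $V$ under $G$. The second is a set of lower bounds on quotient tensor ranks. I would prove the classification by exhaustive computation and the rank bounds by the contraction/substitution-plus-certificate method this section promises.

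For coverage, enumerate all subspaces of $V\cong\F^9$ by dimension. There are $\sum_k\qbinom{9}{k}$ of them, a number small enough to handle directly. Compute $G$-orbits by acting with the $56{,}448$ elements, or better with generators of $G$: elementary row operations, elementary column operations, and transpose, closing orbits by breadth-first search on canonical forms such as reduced row echelon bases. Then check two things: every orbit contains some $W_i$, and any two representatives in the same orbit carry equal labels. Orbit-stabilizer counting gives an independent check that the orbit sizes sum to $\qbinom{9}{k}$.

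The specific values follow from the same orbit data.
\begin{enumerate}
\item The nonzero matrices split into rank orbits $1,2,3$. Identify the catalogue indices for these orbits and read off the label $19$.
\item Enumerate the $\qbinom{9}{2}$ planes and sort them into $14$ orbits. Filter the orbits whose three nonzero elements all have rank $\ge2$, and match them to $W_{484},\dots,W_{491}$.
\item For $S=\langle p,R\rangle$, list its $\qbinom{4}{3}=15$ hyperplanes $K$ and locate each one's orbit label.
\end{enumerate}

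The genuine content is the rank bounds $R_{\F}(T_{\langle a\rangle})\ge19$ and $R_{\F}(T_K)\ge17$. By the symmetries \eqref{eq:symmetry}--\eqref{eq:transpose}, it suffices to treat the three rank representatives $E_{00}$, $E_{00}+E_{11}$, $I$, and the orbit representatives of the fourteen $K$'s. Each bound then rests on a substitution (Pan/Bläser-style) argument. Fix a hypothetical decomposition of $T_K$ of length one less than the bound. By Lemma~\ref{lem:occupation} applied inside $V/K$, the number of first factors falling in any subspace $U/K$ is at most $r-R_{\F}(T_U)$. Recursive known bounds for the larger quotients $T_U$ (inductively down the catalogue, since $T_U$ restricts multiplication to the smaller space $U^\perp$) then give an integer program over occupation counts of the finitely many subspaces of $V/K$. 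Feasibility of that program would have to be refuted.

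I would organize this as a branch tree: split on whether some first factor lies in a chosen $U$. Each leaf either violates a counting inequality or reduces, after substituting away one term, to a previously certified bound on a smaller tensor. The main obstacle is making this certificate tree finite and checkable, especially for the bound $19$ on lines, where the margin over the generic bound $20-1$ is tight and needs genuinely fine occupation data. If the LP relaxation is not infeasible, I would first try adding the flattening inequality \eqref{eq:intro-budget} restricted to the quotient. Failing that, I would branch on the matrix rank of the factors lying in the relevant rank-one-free planes.
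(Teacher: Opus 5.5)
This is essentially the paper's own computer-assisted proof. The catalogue statements come from a machine-checked orbit classification against the definition of $L_0$. The rank bounds come from a recursive tree of contraction and substitution base bounds and occupation (weighted-cover and integer-branch) certificates, transported between representatives by tensor symmetries. The one miscalibration is where you place the difficulty. The line bounds are the easy end of the recursion: the rank-one line falls to the single $255$-plane cover $r\le 15(r-17)+240(r-18)$, and the rank-two and rank-three lines follow by containment in images of the raised planes $W_{484}$ and $W_{488}$. The real certificates sit at the plane and affine-span level. There is also no ``generic'' bound $20-1$ for lines to improve on, since the paper deduces $R\ge 20$ from the line bounds rather than conversely.
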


\begin{proof}[Computer-assisted proof]
The orbit coverage, label consistency, exact numeric values, and
two-plane classification are checked against the finite definition
\eqref{eq:L0}. Separately, closed quotient-rank proofs establish the
three line bounds and the normalized affine-hyperplane bounds.
Exact tensor symmetries transport them to the families stated here.
The lower-bound rules and representative source proofs are given in
Section~\ref{sec:source-bounds} and Appendix~\ref{app:source-substitution}.
These proofs precede Theorem~\ref{thm:main} in the dependency graph.
Appendix~\ref{app:finite-data} specifies the small spaces, and
Section~\ref{sec:formalization} identifies the formal interfaces.
\end{proof}

The labels specify lower bounds rather than exact quotient ranks; the
values $18$ in (ii) will be strengthened while $L_0$ remains fixed.
The line bounds already imply the full-tensor bound $20$: in any
minimal decomposition select a nonzero first factor $a$.
Lemma~\ref{lem:occupation} and the line bound give $1\leq r-19$,
hence $r\geq20$.

\subsection{Deriving the source bounds}\label{sec:source-bounds}
A source bound is a previously proved inequality for a quotient
$T_U$, used to constrain a decomposition of $T_W$ with $W\leq U$.
The following rules establish these inequalities recursively.

For $L\in V$, write $H_L=\{X:\langle L,X\rangle=0\}$.
If $W\leq H_L$, contraction by $L$ is well defined on $V/W$.
Applied to $T_W$, it gives the matrix
\begin{equation}\label{eq:source-contraction}
 C_L[(j,k),(i,l)]=L_{ij}\delta_{kl},\qquad
 \rank C_L=3\rank L.
\end{equation}
Indeed, $C_L=L^T\otimes I_3$ in the displayed ordering. Each
simple tensor contracts to a matrix of rank at most one, so
\begin{equation}\label{eq:contraction-bound}
 R_{\F}(T_W)\geq3\rank L.
\end{equation}
This gives basic bounds $3$, $6$, and $9$. Several annihilating
functionals can also be assembled into an ordinary flattening of
$T_W$. A nonsingular minor of order $d$ gives a lower bound $d$;
the finite proofs specify a minor and check a left inverse over $\F$.

Occupation can strengthen these initial bounds without finding a
larger minor. For example, let
\begin{equation}\label{eq:source-example}
 F=\begin{pmatrix}0&0&0\\0&1&0\\0&0&1\end{pmatrix},\qquad
 G_0=\begin{pmatrix}0&0&0\\0&0&1\\0&1&1\end{pmatrix},\qquad
 W=H_F\cap H_{G_0}.
\end{equation}
The matrices $F,G_0,F+G_0$ all have rank two. Their three
hyperplanes therefore have quotient rank at least $6$. Every $X\in V$
belongs to at least one of them: among the binary values
$\langle F,X\rangle$, $\langle G_0,X\rangle$, and their sum,
one is zero. In a length-$r$ decomposition of $T_W$, let $m_1,m_2,m_3$
count first factors in these hyperplanes modulo $W$. By coverage and
Lemma~\ref{lem:occupation},
\begin{equation}\label{eq:three-hyperplane-bound}
 r\leq m_1+m_2+m_3\leq3(r-6),
 \qquad\text{hence}\qquad r\geq9.
\end{equation}
This proves the catalogue bound for $W_8$: the annihilator of its
displayed basis is precisely $\langle F,G_0\rangle$. The example
shows why several small contraction bounds can give more information
than any one of them separately.

Some source bounds require substitution, a standard method in
bilinear complexity~\cite{blaser}. Regard a quotient tensor
as a family of second-factor slices $S_b$. A nonzero slice forces
some summand to have a nonzero coefficient there. Subtracting suitable
multiples of that slice from the others removes this summand, at the
cost of modifying the remaining slices. If $k$ selected slices are
independent and a flattening of rank at least $f$ survives all these
modifications, every decomposition has length at least $k+f$.
Appendix~\ref{app:source-substitution} gives the precise survival
condition and a complete $6+6=12$ example used among the source
bounds. The condition is essential: the residual after substitution,
rather than the original tensor, must retain the lower bound.

The recursive occupation step generalizes~\eqref{eq:three-hyperplane-bound}.
Suppose $W\leq U_v$, $R_{\F}(T_{U_v})\geq\lambda_v$, and
nonnegative weights $y_v$ cover each quotient class at least $c>0$
times:
\begin{equation}\label{eq:weighted-cover}
 \sum_{v:X\in U_v}y_v\geq c\quad(X\in V).
\end{equation}
Summing the occupation inequalities then gives
\begin{equation}\label{eq:weighted-bound}
 cr\leq\sum_v y_v m(U_v/W)
       \leq\sum_v y_v(r-\lambda_v).
\end{equation}
Thus a cover and previously established source bounds can exclude a
proposed length. More general exclusions split into exhaustive integer
branches before taking such nonnegative combinations; these are the
certificates described next.

For a concrete source used in the raise for $W_{484}$, consider
\[
 U=W_{450}=\Span(\operatorname{mat}(68),\operatorname{mat}(19),
 \operatorname{mat}(10)),
\]
with matrix codes as in
Appendix~\ref{app:finite-data}. Its certificate uses ten superspaces
with proved bound $17$ and eight with bound $9$. The first ten have
weights $4,4,2,2,2,2,2,2,2,2$; the others have weight one.
They give a $4$-fold cover of $V/U$, checked on its $64$ classes.
Equation~\eqref{eq:weighted-bound} therefore reads
\[
 4r\leq32r-(24\cdot17+8\cdot9)=32r-480.
\]
No $r\leq17$ satisfies this inequality, so $R_{\F}(T_U)\geq18$.
Since $W_{484}\leq U$, this is one of the bounds that forces
occupation to vanish when a length-$18$ decomposition of
$T_{W_{484}}$ is considered. The bound-$17$ sources are obtained by
the same contraction, substitution, and occupation rules.

Finally, source statements pass between equivalent representatives
using the tensor symmetries of Section~\ref{sec:preliminaries}.
Containment has a fixed direction:
\begin{equation}\label{eq:quotient-monotonicity}
 W\leq gU\quad\Longrightarrow\quad
 R_{\F}(T_W)\geq R_{\F}(T_{gU})=R_{\F}(T_U)
 \qquad(g\in G).
\end{equation}
The implication follows by applying $V/W\to V/gU$ to a
decomposition.

These rules give concrete routes to the two remaining families in
Proposition~\ref{prop:frozen}. For the line $W=\langle E_{00}\rangle$,
the $255$ planes containing $W$ partition the nonzero classes of $V/W$.
Fifteen of these planes have proved bound $17$, from the types
$W_{478},W_{479}$; the other $240$ have proved bound $18$, from
$W_{480},\ldots,W_{483}$. The zero class belongs to every plane, so
the same cover applies to decompositions with zero first factors.
Equation~\eqref{eq:weighted-bound} gives
\begin{equation}\label{eq:line-cover}
 r\leq15(r-17)+240(r-18)=255r-4575,
 \qquad\text{hence}\qquad r\geq19.
\end{equation}
The rank-two and rank-three line bounds follow by containment and
symmetry from the raises for $W_{484}$ and $W_{488}$ proved below.

For the fourteen affine-plane spans, three representatives suffice:
the coded bases $(272,4,2)$, $(273,4,2)$, and $(272,4,1)$ in
Appendix~\ref{app:finite-data}. The first is contained in a symmetry
image of $W_{262}=(132,12,2,1)$, whose bound $17$ transfers
by~\eqref{eq:quotient-monotonicity}. The other two are symmetry-equivalent
to $W_{415}$ and $W_{416}$. Their integer occupation certificates exclude length $16$
using respectively $101$ and $373$ source rows on the $63$ nonzero
quotient classes, with the branch rule described next. The remaining
eleven spans are symmetry images of the third representative.
Appendix~\ref{app:source-interfaces} locates these proofs and their
source-bound dependencies.

\subsection{The eight occupation systems}
For $i=484,\ldots,491$, put $Q_i=V/W_i$. Index nonnegative
integer weights by its nonzero classes $Q_i^*=Q_i\setminus\{0\}$,
and impose
\begin{equation}\label{eq:integer-system}
 \begin{aligned}
 x_q&\in\mathbb Z_{\geq0} &&(q\in Q_i^*),\\
 \sum_{q\in Q_i^*}x_q&=18,\\
 \sum_{q\in(U/W_i)\setminus\{0\}}x_q
       &\leq18-L_0(U) &&(W_i<U<V).
 \end{aligned}
\end{equation}
Every quotient has dimension seven, so this specifies $127$ variables
and $29{,}210$ proper nonzero quotient-subspace rows. The definition of
$L_0$ and the complete catalogue make every coefficient and bound in
these systems explicit.

\begin{proposition}[Finite occupation exclusions]\label{prop:integer}
For each $i=484,\ldots,491$, system~\eqref{eq:integer-system}
has no integer solution. Its retained source rows have proved
quotient bounds $R_{\F}(T_{U_v})\geq\lambda_v$, and its
zero-forcing rows have proved bounds at least $18$.
\end{proposition}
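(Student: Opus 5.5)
Proposition~\ref{prop:integer} is a finite statement about eight explicit integer systems, so I plan to prove it by exhaustive case splitting organised into checkable certificates. I will not try to solve the systems directly.

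\textbf{Step 1: make each system explicit.} For fixed $i$, I would enumerate the $127$ nonzero classes of $Q_i=V/W_i$. For every intermediate subspace $W_i<U<V$, I would compute $L_0(U)$ from the catalogue. This uses the orbit coverage and label consistency of Proposition~\ref{prop:frozen}, and produces the coefficient row of $U/W_i$ with right-hand side $18-L_0(U)$.

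\textbf{Step 2: zero-forcing rows.} Every $U$ with $L_0(U)\ge 18$ forces $x_q=0$ for all $q\in U/W_i$. I would first eliminate these variables, which should remove a large fraction of the $127$ classes.

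\textbf{Step 3: branch and certify infeasibility.} On the surviving variables, I would branch on the values of a few heavily constrained variables, or on the occupation of a chosen superspace. The branches must be exhaustive: their integer ranges together cover all nonnegative values compatible with the total $18$. At each leaf I would exhibit nonnegative rational (or integer) multipliers $y_v$ on a subset of the rows, as in \eqref{eq:weighted-cover}--\eqref{eq:weighted-bound}. The multipliers must cover each surviving class at least $c$ times, while $\sum_v y_v(18-\lambda_v)<18c$, giving a Farkas-type contradiction with $\sum x_q=18$. Each leaf check is then a finite verification of a cover inequality on $127$ classes plus one arithmetic comparison; this is exactly the kind of check Lean can run by decision.

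\textbf{Step 4: validity of the retained rows.} The second part of the statement asks that every row actually used carry a proved bound, not merely a catalogue label. I would record the set of retained rows and zero-forcing rows appearing in the certificates. For each, I would supply a proof of $R_{\F}(T_{U_v})\ge\lambda_v$ via the rules of Section~\ref{sec:source-bounds}: contraction \eqref{eq:contraction-bound}, minors, substitution, or recursive occupation. These proofs would be transported by \eqref{eq:quotient-monotonicity} and the symmetry group $G$. The dependency graph must be acyclic and must not use the bounds being raised for $W_{484},\dots,W_{491}$. The example for $W_{450}$ shows how the bound-$18$ zero-forcing rows are obtained from bound-$17$ and bound-$9$ sources.

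\textbf{Main obstacle.} I expect the hard part to be Step 3 together with its interaction with Step 4. The LP relaxation alone may be feasible, so a fractional certificate at the root will likely fail and genuine integer branching is needed. The branching must also stay small enough to check. My first attempt would be to solve the LP and branch on a fractional variable's floor/ceiling. I would keep only the rows with positive dual weight, which minimises how many source bounds must be independently proved.
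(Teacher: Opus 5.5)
Your architecture is the paper's. Zero-forcing rows with $L_0(U)\geq18$ delete directions, and an exhaustive integer branch tree is built on the surviving variables. Each leaf carries an exactly checkable Farkas certificate. Every row used is backed by a separately proved quotient bound, transported by symmetry and containment, with no dependence on the raises for $W_{484},\dots,W_{491}$. Two differences are inessential: the paper only needs $\lambda_v\leq L_0(U_v)$, certified by one catalogue representative, rather than exact values of $L_0$, and it branches only on single variables.

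\textbf{The gap: your leaf certificate never uses the branch bounds.} As you state it, a leaf certificate is a set of weights $y_v\geq0$ on occupation rows. These weights cover every surviving class at least $c$ times and satisfy $\sum_v y_v(18-\lambda_v)<18c$. The branch inequalities do not enter. By the computation in \eqref{eq:weighted-bound}, such a certificate excludes every \emph{real} $z\geq0$ with total $18$ satisfying those rows. It therefore refutes an LP relaxation whose only branching information is which variables have been set to zero.

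Consequently, a child created by a bound $z_j\geq k+1$, or by $z_j\leq k$ with $k\geq1$, can be closed this way only if its parent already could be. If the root relaxation is feasible, as you expect, your floor/ceiling tree never closes.

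\textbf{The fix.} The leaf certificate must include the bounds along the path. Write the retained rows, nonnegativity, the total, and the branch inequalities uniformly as $a_v\cdot z\leq b_v$. Then exhibit nonnegative multipliers with $\sum_v y_va_v=0$ and $\sum_v y_vb_v<0$, exactly as in \eqref{eq:farkas}. The LP duals you propose to compute at the leaves supply such multipliers once the branch bounds are kept as rows. With this correction, your Step~3 is the paper's argument.
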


\begin{proof}[Integer-certificate proof]
Let $J\subseteq Q_i^*$ be the quotient directions retained after
zero-forcing, and let $U_v$ be the retained superspaces. A row
with $L_0(U)\geq18$ forces all its nonnegative summands to zero. After
these eliminations, the retained variables $z_j$ satisfy
\begin{equation}\label{eq:retained}
 z_j\geq0\ (j\in J),\qquad \sum_{j\in J}z_j\geq18,\qquad
 \sum_j a_{vj}z_j\leq18-\lambda_v,
\end{equation}
where $a_{vj}=1$ exactly when $j\in U_v/W_i$, and is zero otherwise.
The proved row label satisfies
$\lambda_v\leq L_0(U_v)$. Replacing equality of the total by a lower
bound weakens the system and is sufficient for the certificates.
Table~\ref{tab:branches} records the retained sizes.

At an internal vertex the integer branch tree splits on
$z_j\leq k$ or $z_j\geq k+1$. These alternatives are exhaustive over
$\mathbb Z$. At a leaf, take the retained inequalities, the
nonnegativity and total inequalities as needed, and the branch
inequalities along its path, all written as $a_v\cdot z\leq b_v$.
The leaf supplies nonnegative integer multipliers $y_v$ with
\begin{equation}\label{eq:farkas}
 \sum_v y_v a_v=0,\qquad \sum_v y_v b_v<0.
\end{equation}
Multiplying and adding the inequalities would give
$0\leq\sum_v y_vb_v<0$. Thus no leaf admits a solution; induction up
the exhaustive branch tree excludes every integer solution of the
retained system.

In the eight concrete certificates, Lean checks each coefficient
cancellation and strict negative right-hand side in~\eqref{eq:farkas},
and combines the leaves through explicit integer case splits. For all
$5{,}917$ source and zero-forcing rows, the data specify a basis of
$U_v$, verify $W_i<U_v<V$, and check the membership coefficients
and the bound $\lambda_v\leq L_0(U_v)$. Hence any
solution of the full system would restrict to a solution excluded by
the corresponding branch tree. The resulting eight statements quantify
over arbitrary nonnegative integer weights on the actual quotient;
they assume neither tensor realizability nor distinct support.

For the rank interpretation, every selected source bound
$R_{\F}(T_{U_v})\geq\lambda_v$ and zero-forcing bound at least $18$
has its own closed proof, connected by checked quotient transports.
Applying Lemma~\ref{lem:occupation} to these proved quotients shows
that a hypothetical short decomposition satisfies the retained
system. Numeric comparisons with $L_0$ establish the full-system
exclusion; the source proofs establish its tensor-rank interpretation.
\end{proof}

\begin{corollary}[Two-dimensional raises]\label{cor:raises}
If $W\leq V$ is two-dimensional and all three nonzero matrices in $W$
have rank at least two, then $R_{\F}(T_W)\geq19$.
\end{corollary}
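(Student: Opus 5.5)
The plan is to reduce to the eight representatives and then argue by contradiction through Proposition~\ref{prop:integer}. By Proposition~\ref{prop:frozen}(ii), a two-dimensional $W$ whose three nonzero matrices all have rank at least two lies in the $G$-orbit of one of $W_{484},\ldots,W_{491}$. Every $g\in G$ extends to a symmetry of the full tensor by \eqref{eq:symmetry} and \eqref{eq:transpose}, so $R_{\F}(T_{gW})=R_{\F}(T_W)$. It therefore suffices to prove $R_{\F}(T_{W_i})\geq19$ for each $i=484,\ldots,491$.

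Fix such an $i$ and suppose $T_{W_i}$ has a decomposition of length at most $18$. Pad it with zero summands to length exactly $18$; Lemma~\ref{lem:occupation} explicitly allows zero and repeated summands. Let $q_1,\ldots,q_{18}\in Q_i=V/W_i$ be the images of the first factors. For $q\in Q_i^*$ put $x_q=|\{t:q_t=q\}|$, and count the first factors equal to $0$ separately as $x_0$.

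The aim is to produce a solution of \eqref{eq:integer-system}, or at least of the retained system \eqref{eq:retained}; the proof of Proposition~\ref{prop:integer} shows that the retained system is infeasible. Take a retained source row $U_v$ with proved bound $R_{\F}(T_{U_v})\geq\lambda_v$. The second part of Lemma~\ref{lem:occupation} applies with $W=W_i\leq U=U_v$ and $r=18$. It bounds the number of first factors in $U_v/W_i$, the zero class included, by $18-\lambda_v$. Dropping $x_0$ can only weaken this, so $\sum_j a_{vj}z_j\leq 18-\lambda_v$. A zero-forcing row has proved bound at least $18$, so every variable in it is $0$.

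\textbf{Main obstacle: the total inequality $\sum_{j\in J}z_j\geq18$, which fails if zero classes are present.} I would handle it in two steps.
\begin{itemize}
\item[$\bullet$] First eliminate zero classes. A term with $q_t=0$ vanishes in $T_{W_i}$ and can be deleted, leaving a decomposition of length $17$. Re-padding would reintroduce zero classes, so padding does not fix this.
\item[$\bullet$] Instead, take a \emph{minimal} decomposition of length $r\leq18$, which has no zero first factors in $Q_i$. Then replace one summand $a\otimes b\otimes c$ by the two summands $a\otimes b\otimes c$ and $a\otimes b'\otimes c'$, with $b'\otimes c'$ cancelling against a copy. The cleaner option is to duplicate a summand as $a\otimes b\otimes c + a\otimes b\otimes c + a\otimes b\otimes c$, which over $\F$ equals $a\otimes b\otimes c$. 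This raises the length by $2$ while keeping every first factor nonzero, and a single duplication (length $+1$) is not available.
\end{itemize}
For odd gaps $18-r$ I would split $a=a_1+a_2$ with $a_1,a_2\notin W_i$, possible since $\dim Q_i=7$. This replaces one term by two, each with nonzero first-factor class. Repeating reaches length exactly $18$ with all classes in $Q_i^*$, so $\sum_q x_q=18$.

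Every row inequality then holds by the occupation argument above. This contradicts Proposition~\ref{prop:integer}, and hence $R_{\F}(T_W)\geq19$.
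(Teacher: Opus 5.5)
Your argument is correct, and its skeleton matches the paper's. You reduce to $W_{484},\dots,W_{491}$ by Proposition~\ref{prop:frozen}(ii) and tensor symmetry. You then turn a hypothetical short decomposition into a solution of the retained system \eqref{eq:retained} via Lemma~\ref{lem:occupation}, contradicting the branch certificates.

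The difference is in the step you flag as the main obstacle. The paper never normalizes the decomposition; it uses one zero-forcing superspace $D>W_i$ with $R_{\F}(T_D)\geq18$. Passing a decomposition of $T_{W_i}$ to $V/D$ excludes every length $r<18$ outright. At $r=18$, the cap $m(D/W_i)\leq18-18=0$ also covers the zero class, since $0\in D/W_i$. So zero first factors, whether original or introduced by padding, are killed by the same rows that kill the zero-forced directions. In particular, your first idea of padding with zero summands already gives a contradiction, just through a zero-forcing row rather than through the total inequality.

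Your route instead takes a shortest decomposition, which has no zero classes. It then raises the length one step at a time by splitting a first factor $a=a_1+a_2$ with $a_1,a_2\neq0$ in $Q_i$, which is possible since $\dim Q_i=7\geq2$. This is valid and does not rely on a zero-forcing row for the length issue, at the cost of an extra construction. Splitting alone handles every gap $18-r$. The tripling trick and the abandoned $b'\otimes c'$ idea in your first bullet are therefore unnecessary and can be removed.
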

\begin{proof}
First take $W=W_i$ and suppose $T_W$ has a decomposition of length
$r\leq18$. There is at least one zero-forcing superspace $D>W$
with a directly proved bound $R_{\F}(T_D)\geq18$. Quotienting to
$V/D$ already excludes $r<18$. If $r=18$, occupation in each
zero-forcing superspace is at most zero. Hence every excluded direction
and the zero quotient class have multiplicity zero. The retained
nonnegative multiplicities have total $18$, and the proved source
bounds give every row in~\eqref{eq:retained} by
Lemma~\ref{lem:occupation}. The checked branch tree excludes this
vector. Finally, classification and tensor symmetries transfer the
bound to every all-high plane.
\end{proof}

\subsection{The Boolean and integer realizations}
The research report certified the eight exclusions by CNF/DRAT
refutations checked with DRAT-trim~\cite{drat-trim}, using the earlier
certified quotient table~\cite{qiushi-report}.
The Lean proof establishes the needed source bounds within Lean
and uses integer branch certificates.
Both routes apply occupation to the same quotient tensors and prove
the same eight raises; their proof objects and dependency orders differ.
The main theorem uses the selected source bounds and integer
certificates above; the aggregate catalogue theorem is established
afterward in Proposition~\ref{prop:global}. Appendix~\ref{app:certificate-realizations}
describes the Boolean encoding and its relation to the formal proof.

\section{From occupation to the profile \texorpdfstring{$(16,1,3)$}{(16,1,3)}}\label{sec:geometry}

Assume there is a decomposition of length at most $20$. By the line
bounds and the observation after Proposition~\ref{prop:frozen}, a
minimal such decomposition has length exactly $20$ and every factor
in every term is nonzero. Fix this decomposition for the section.

\subsection{Split flattening and distinct first factors}
Define the $A$-split flattening by
\[
 \Phi_A(T)[(i,b),(j,c)]=T[3i+j,b,c],
 \quad 0\leq i,j<3,\quad 0\leq b,c<9.
\]
It is a $27\times27$ matrix. For $P=\Phi_A(\T)$, a row
$(i,3u+v)$ has its unique nonzero entry in column $(u,3i+v)$.
This correspondence is a bijection, so $P$ is a permutation matrix
of rank $27$. For a simple tensor, write
\begin{equation}\label{eq:M}
 M(A,B,C)=\Phi_A(A\otimes B\otimes C)
          =A\otimes\bigl(\operatorname{vec}(B)
                           \operatorname{vec}(C)^T\bigr).
\end{equation}
Here the Kronecker product on the right is a matrix product construction,
and vectorization is row-major. Since $B,C$ are nonzero, their outer
product has rank one and $\rank M(A,B,C)=\rank A$. Therefore
\begin{equation}\label{eq:budget}
 27=\rank P\leq\sum_{t=1}^{20}\rank A_t,
 \qquad \sum_{t=1}^{20}(\rank A_t-1)\geq7.
\end{equation}
The first inequality also holds with zero factors, using
$\rank M(A,B,C)\leq\rank A$.

The line cap is $20-19=1$. Since a line over $\F$ has just one
nonzero element, Lemma~\ref{lem:occupation} implies that the twenty
$A_t$ are pairwise distinct. Let
\[
 \mathcal H=\{A_t:\rank A_t\geq2\}.
\]
Every term contributes at most two to the excess in~\eqref{eq:budget},
so $|\mathcal H|\geq4$.

\begin{lemma}\label{lem:pair}
Distinct $a,b\in\mathcal H$ satisfy $\rank(a-b)=1$.
\end{lemma}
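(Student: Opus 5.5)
Consider the plane $W=\langle a,b\rangle=\{0,a,b,a+b\}$. Since $a\neq b$ are distinct nonzero matrices, $W$ is two-dimensional. Over $\F$ we have $a-b=a+b$, so this third nonzero element can have rank $0$, $1$, $2$ or $3$; rank $0$ is impossible because $a\neq b$. The plan is to show that rank at least two is also impossible, which leaves $\rank(a-b)=1$.

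Suppose, for contradiction, that $\rank(a+b)\geq2$. Then all three nonzero matrices of $W$ have rank at least two, so Corollary~\ref{cor:raises} gives $R_{\F}(T_W)\geq19$. Both $a$ and $b$ occur among the first factors of the fixed length-$20$ decomposition, since $\mathcal H$ consists of first factors. Hence $m(W)\geq2$. Lemma~\ref{lem:occupation} gives $m(W)\leq20-R_{\F}(T_W)\leq1$, which is a contradiction. Therefore $\rank(a-b)\leq1$, and combined with $a\neq b$ this gives exactly $\rank(a-b)=1$.

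All the difficulty sits in Corollary~\ref{cor:raises}, that is, in the certified integer exclusions of Proposition~\ref{prop:integer}. The catalogue label $18$ alone would only give $m(W)\leq2$, which does not produce a contradiction. The plan therefore rests on the strengthened bound $19$ for all-high planes and on the classification transfer that Corollary~\ref{cor:raises} already provides. Once that corollary is available, the only remaining steps are the counting argument above and the observations that $a\neq b$ and that characteristic $2$ gives $a-b=a+b\in W$.
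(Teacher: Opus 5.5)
Your proposal is correct and matches the paper's proof: if $a+b$ had rank at least two, Corollary~\ref{cor:raises} and Lemma~\ref{lem:occupation} would give $m(\langle a,b\rangle)\leq 20-19=1$, contradicting the two first factors $a,b$ in that plane, so $\rank(a+b)=1$ because $a+b\neq0$. You also correctly note that the inherited label $18$ would only give $m\leq2$, which is why the strengthened raises are needed.
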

\begin{proof}
The span $W=\langle a,b\rangle$ is a plane containing two first
factors. If $a+b$ also had rank at least two,
Corollary~\ref{cor:raises} would give $m(W)\leq20-19=1$, a
contradiction. Since $a+b\ne0$ and subtraction equals addition in
$\F$, its rank must be one.
\end{proof}

\subsection{Affine matrix geometry}
\begin{lemma}[Affine row or column coset]\label{lem:coset}
Let $\mathcal H\subseteq\Mat$ contain at least four distinct matrices,
and suppose $\rank(a-b)=1$ for all distinct $a,b\in\mathcal H$.
Then either
\[
 \mathcal H\subseteq p+\{uw^T:w\in\F^3\}
 \quad\text{or}\quad
 \mathcal H\subseteq p+\{wv^T:w\in\F^3\}
\]
for a matrix $p$ and a nonzero vector $u$ or $v$.
\end{lemma}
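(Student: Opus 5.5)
We normalize by translation: fix some $a_0\in\mathcal H$ and set $D=\{h-a_0:h\in\mathcal H,\ h\neq a_0\}$. Every element of $D$ is a nonzero rank-one matrix, and for distinct $d,d'\in D$ we also have $\rank(d-d')=1$. The conclusion then reads: all elements of $D$ share a common column space $\langle u\rangle$, or all share a common row space $\langle v\rangle$. In that case we take $p=a_0$; since $0=u\cdot0^T$ lies in the set, $a_0$ itself is covered. So the task reduces to a statement about a set $D$ of at least three rank-one matrices whose pairwise differences are all rank one.

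\textbf{Pairs.} Write two distinct rank-one matrices over $\F$ as $d=xy^T$ and $d'=x'y'^T$. If $x\neq x'$ and $y\neq y'$, then $x,x'$ are independent and $y,y'$ are independent, because distinct nonzero vectors over $\F$ are independent. Hence $d+d'=\begin{pmatrix}x&x'\end{pmatrix}\begin{pmatrix}y&y'\end{pmatrix}^T$ has rank two. So $\rank(d-d')=1$ forces $x=x'$ or $y=y'$. Over $\F$ a rank-one matrix determines its column vector and row vector uniquely, which makes this dichotomy well defined.

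\textbf{Triples.} Take three distinct elements $d_1,d_2,d_3\in D$, with $d_i=x_iy_i^T$. Suppose, against the desired conclusion, that $x_1=x_2$ but $y_1\neq y_2$, while $y_2=y_3$ but $x_2\neq x_3$. The pair $(d_1,d_3)$ must then satisfy $x_1=x_3$ or $y_1=y_3$. But $x_1=x_2\neq x_3$ and $y_1\neq y_2=y_3$, so both alternatives fail, a contradiction. Thus any three elements of $D$ share a common column vector or a common row vector.

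\textbf{Globalizing, and the main obstacle.} The hardest step is passing from triples to all of $D$, since a priori different triples could be "column-type" and "row-type". Suppose $d_1,d_2$ share the column vector $x$, so $y_1\neq y_2$. Any further $d\in D$ satisfies the triple condition with $d_1$ and $d_2$. If $d$ had column vector $\neq x$, it would need row vector equal to both $y_1$ and $y_2$, which is impossible. So every element of $D$ has column vector $x$, and we take $u=x$. The symmetric argument with rows gives $v$.

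It remains to rule out two mixed configurations. First, a common column and a common row cannot both occur in a way that defeats this argument: two distinct rank-one matrices cannot share both vectors. Second, we must check that the translation by $a_0$ does not lose the pairwise conditions involving $a_0$; these are exactly the rank-one conditions on elements of $D$ themselves. With these checks, the hypothesis $|\mathcal H|\geq4$ gives $|D|\geq3$, which is enough; the argument in fact only needs $|D|\geq2$.
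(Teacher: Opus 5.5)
Your proof is correct and is essentially the paper's argument. It uses the same pairwise lemma: two distinct rank-one matrices over $\F$ with rank-one difference share their column vector or their row vector. It then fixes a base point and two differences sharing a column vector $u$ with distinct row vectors, which forces every further difference to have column vector $u$, and symmetrically for rows. The intermediate triples step and the closing remarks about mixed configurations are redundant but harmless. Your observation that the hypothesis of four elements is stronger than needed is also correct.
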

\begin{proof}
Two nonzero rank-one matrices $u_1v_1^T,u_2v_2^T$ have a rank-one
sum only if $u_1=u_2$ or $v_1=v_2$. Otherwise each pair is linearly
independent over $\F$, and
\[
 u_1v_1^T+u_2v_2^T
  =\begin{pmatrix}u_1&u_2\end{pmatrix}
    \begin{pmatrix}v_1^T\\v_2^T\end{pmatrix}
\]
has rank two: the second map is surjective onto $\F^2$ and the first
is injective from $\F^2$.

Fix distinct $a,b,c\in\mathcal H$ and write
$a+b=u_1v_1^T$, $a+c=u_2v_2^T$. Their sum $b+c$ has rank one,
so they share a left or right factor. They cannot share both, since
$b\ne c$. Suppose they share the left factor $u$, with $v_1\ne v_2$.
For any other $d\in\mathcal H$, write $a+d=u'v'^T$. The rank-one
conditions for $b+d$ and $c+d$ force $u'=u$: otherwise they would
require both $v'=v_1$ and $v'=v_2$. Thus every difference from $a$
has left factor $u$, proving the column-coset alternative. The case of
a shared right factor is identical after transposition.
\end{proof}

Apply the lemma to the high-rank factors. The symmetry
\eqref{eq:transpose} exchanges the two alternatives, so it suffices to
consider a column coset. Left multiplication sends its common nonzero
column to $e_0$. Subtracting a first-row matrix from the base point
does not change the coset, so its elements can be written
\begin{equation}\label{eq:coset-block}
 \begin{pmatrix}w^T\\D\end{pmatrix},
 \qquad w\in\F^3,\quad D\in\F^{2\times3}.
\end{equation}

If $\rank D=0$, no element has rank at least two. If $\rank D=1$,
exactly six first rows lie outside its one-dimensional row space.
Those six matrices have rank two; the other two have rank one.
Distinctness of the first factors would then bound the entire rank
excess by six, contradicting~\eqref{eq:budget}. Hence $\rank D=2$.

A right change of basis sends the row space of $D$ to
$\langle e_1^T,e_2^T\rangle$, and a left action
$\operatorname{diag}(1,K)$, $K\in\GL_2(\F)$, puts the two lower
rows into that ordered basis. Both transformations preserve the
first-row direction and extend to tensor symmetries. We may therefore
assume
\begin{equation}\label{eq:normalized-coset}
 \mathcal H\subseteq p+R,\qquad
 p=\begin{pmatrix}0&0&0\\0&1&0\\0&0&1\end{pmatrix},\qquad
 R=\{e_0w^T:w\in\F^3\}.
\end{equation}
The eight labels $w$ identify this coset with affine $3$-space over
$\F$. Four matrices have rank two and four have rank three; the latter
labels are precisely $w_0=1$.

\subsection{Affine-plane capacities}
Every affine plane $\pi=w_*+D_0\subset\F^3$, with $\dim D_0=2$,
corresponds to the linear three-space
\[
 K_\pi=\Span\{p+e_0w^T:w\in\pi\}\leq S=\langle p,R\rangle.
\]
Indeed $K_\pi=\langle p+e_0w_*^T,\ e_0D_0^T\rangle$.
The first generator is outside $R$, so its dimension is three and
$K_\pi\cap(p+R)$ is exactly the four indicated coset elements.
Conversely, each three-space in $S$ other than $R$ meets $p+R$ in
such an affine plane. There are $7\cdot2=14$ of them.
Proposition~\ref{prop:frozen}(iii) and occupation give
\begin{equation}\label{eq:plane-cap}
 |\mathcal H\cap K_\pi|\leq m(K_\pi)\leq3.
\end{equation}

\begin{lemma}\label{lem:five}
Every five-element subset of $\F^3$ contains a four-point affine plane.
\end{lemma}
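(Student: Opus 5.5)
The plan is to translate the statement into a parity condition and then count. In $\F^3$, a four-point subset $\{a,b,c,d\}$ is an affine plane exactly when $a+b+c+d=0$. Indeed, a plane $w_*+D_0$ with $D_0=\{0,x,y,x+y\}$ has point sum $4w_*+2(x+y)=0$. Conversely, if $a+b+c+d=0$ with the points distinct, then $d=a+(b-a)+(c-a)$. The differences $b-a$ and $c-a$ are distinct and nonzero, hence linearly independent over $\F$, so the four points form $a+\langle b-a,c-a\rangle$. So it suffices to show that every five-element set $F\subseteq\F^3$ contains four distinct points with sum zero.

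The main step is a complement argument. The complement of $F$ in $\F^3$ has exactly three points. The sum of all eight points of $\F^3$ is zero, since each coordinate equals $1$ in exactly four points. Let $s=\sum_{x\in F}x$. For each $x\in F$, the four-point subset $F\setminus\{x\}$ has sum $s+x$, and this is zero exactly when $x=s$. So $F$ contains an affine plane whenever $s\in F$.

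The remaining case is $s\notin F$, and this is the one place needing care. I would handle it by choosing a different four-element subset. Fix a point $x\in F$ and look for $y\in F\setminus\{x\}$ and $z\notin F$ such that $F\setminus\{x,y\}\cup\{z\}$ has sum zero. That route leaves $F$, however, so it is better to argue directly.

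The direct count uses the $\binom{5}{4}=5$ four-subsets of $F$. If none sums to zero, no three distinct points of $F$ together with a fourth point of $F$ form a plane. Then for every triple $\{a,b,c\}\subseteq F$, the completing point $a+b+c$ lies outside $F$ or among $a,b,c$. The latter is impossible, since $a+b+c=a$ would force $b=c$. So all $\binom53=10$ triples of $F$ complete to one of the three points outside $F$.

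Each point $z\notin F$ completes at most two disjoint triples from $F$. A triple $\{a,b,c\}\subseteq F$ with $a+b+c=z$ is a plane through $z$ minus $z$. The seven affine planes through $z$ correspond to the seven planes through the origin after translation by $z$. Two such triples completing to the same $z$ are distinct planes through $z$, so they meet in $z$ and one more point. I expect this counting of planes through a fixed point to be the fiddliest part.

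A cleaner count avoids that. The seven planes through $z$ partition the seven pairs $\{z,\cdot\}$ three at a time, and any two planes through $z$ share exactly one other point. A family of planes through $z$ whose other three points all lie in the five-set $F$ consists of triples in a five-set, pairwise meeting in exactly one point. Such a family has at most two members: three $3$-subsets of a $5$-set pairwise meeting in one point would need at least $7-3=\ldots$ points, and a direct check shows that $3+2+1>5$ is too many. Hence at most $3\cdot2=6<10$ triples are completed, which is a contradiction.

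As a fallback, the statement is a finite check over the $\binom85=56$ subsets, or over their orbits under the affine group. This is easily verified exhaustively, and I would note it as such.
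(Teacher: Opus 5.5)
Your final triple-counting argument is correct, but it is not the paper's route. The paper applies pigeonhole to pairs: the ten pair sums of a five-set take at most seven nonzero values, so two distinct pairs have equal sums. Such pairs are necessarily disjoint, and their four endpoints sum to zero, so they form a plane by the characterization in your first paragraph. You instead count triples against the complement. With no zero-sum four-subset, each of the ten triples completes to one of the three outside points. The triples completing to a fixed $z$ are planes through $z$ with $z$ removed, pairwise meeting in exactly one point, so there are at most two per $z$. This works, but it uses more structure (intersections of planes through a point and a small set-system bound) than the single count $10>7$.

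Some of the exposition needs repair. Those two triples are not ``disjoint''. The planes through $z$ do not ``partition'' the other seven points, since each point lies on three of them. The fragment ``$7-3=\ldots$'' should be replaced by inclusion--exclusion: three such triples $A,B,C$ would have $|A\cup B\cup C|=9-3+|A\cap B\cap C|\geq6>5$.

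The complement route you abandoned already finishes the proof, faster than either count. Write $\{u,v,w\}=\F^3\setminus F$. Then $s=u+v+w$ because all eight points sum to zero, and $u+v+w\notin\{u,v,w\}$ since these points are distinct. So $s\in F$ always, the case $s\notin F$ is vacuous, and $F\setminus\{s\}$ is the required plane. Your remark that $F\setminus\{x\}$ has sum $s+x$ also shows that this is the only four-point plane in $F$.
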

\begin{proof}
The ten unordered pair sums take only seven possible nonzero values.
Two distinct pairs therefore have the same sum. They cannot share a
point, since that would make their other endpoints equal. Their four
distinct endpoints have sum zero. Translating one endpoint to zero
leaves $0,u,v,u+v$ with distinct nonzero $u,v$, the points of an
affine plane.
\end{proof}

\begin{proposition}[Forced profile]\label{prop:profile}
In a hypothetical minimal length-$20$ decomposition, the counts of first
factors of matrix ranks one, two, and three are respectively
\begin{equation}\label{eq:profile}
 (n_1,n_2,n_3)=(16,1,3),\qquad
 \sum_t\rank A_t=27.
\end{equation}
\end{proposition}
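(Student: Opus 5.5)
We combine the rank budget \eqref{eq:budget} with the affine geometry of $\mathcal H$ inside the normalized coset \eqref{eq:normalized-coset}.

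By Lemma~\ref{lem:five} and the plane capacity \eqref{eq:plane-cap}, $\mathcal H$ contains no five points: any five would contain an affine plane $\pi$, giving $|\mathcal H\cap K_\pi|\geq4$. Combined with $|\mathcal H|\geq4$ from the budget, this gives $|\mathcal H|=4$. Moreover, these four points cannot themselves form an affine plane, again by \eqref{eq:plane-cap}.

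Write $h_2,h_3$ for the numbers of rank-two and rank-three elements of $\mathcal H$, so $h_2+h_3=4$ and $n_2=h_2$, $n_3=h_3$. Every factor is nonzero in a minimal decomposition, so $n_1=20-4=16$. The budget then reads $h_2+2h_3\geq7$. With $h_2+h_3=4$ this forces $h_3\geq3$, leaving the cases $(h_2,h_3)=(1,3)$ and $(0,4)$.

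In the coset, the rank-three labels are exactly $\{w:w_0=1\}$. This set is itself an affine plane, since it is the translate of the linear plane $w_0=0$. So $h_3=4$ would mean $\mathcal H$ is an affine plane, which we have excluded. Hence $(h_2,h_3)=(1,3)$, i.e. $(n_1,n_2,n_3)=(16,1,3)$, and
\[
\sum_t\rank A_t=16+2+9=27.
\]

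The one step needing care is the identification of the rank-three labels with the affine plane $w_0=1$. For $p+e_0w^T$, the lower two rows are $e_1^T,e_2^T$, so the determinant equals $w_0$. This gives rank three exactly when $w_0=1$, and rank two otherwise, since the lower two rows already have rank two.

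Two further points follow directly from earlier results and need only be checked. First, minimality rules out zero first factors, so the rank-one count is indeed $20-|\mathcal H|$. Second, the distinctness used in the budget argument comes from the line cap.
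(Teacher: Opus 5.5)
Your proof is correct and matches the paper's argument: the five-point lemma with the affine-plane cap gives $|\mathcal H|\le 4$, the budget gives $|\mathcal H|\ge 4$ and $n_3\ge 3$, and the rank-three labels $w_0=1$ form an affine plane, which excludes $n_3=4$. The only differences are that you derive $h_3\ge 3$ before excluding $h_3=4$, and that you verify the determinant identity $\det(p+e_0w^T)=w_0$, which the paper simply asserts.
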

\begin{proof}
Equations~\eqref{eq:plane-cap} and Lemma~\ref{lem:five} imply
$|\mathcal H|\leq4$. The budget gives the reverse inequality.
The four rank-three labels $w_0=1$ themselves form an affine plane,
so at most three of the selected factors have rank three. With four
high-rank factors the excess in~\eqref{eq:budget} is $4+n_3$.
It is at least seven, forcing $n_3=3$, $n_2=1$, and $n_1=16$.
Their rank sum is $16+2+9=27$.
\end{proof}

The fourteen bounds $17$ thus turn the rank budget into an exact
equality. The next section extracts the algebraic consequences of that
equality for the complete summands.

\section{Saturation and the cross-factor contradiction}\label{sec:saturation}

The equality in Proposition~\ref{prop:profile} is the point at which
the first-factor geometry constrains the other two factors. We first
record the linear-algebra statement responsible for this transition.

\begin{lemma}[Rank-additive saturation]\label{lem:saturation}
Let $k$ be a field and let $P,M_1,\ldots,M_r$ be $n\times n$
matrices over $k$. Suppose $P$ is invertible and
\[
 P=\sum_{t=1}^r M_t,\qquad \sum_{t=1}^r\rank M_t=n.
\]
Then, for all $s,t$,
\begin{equation}\label{eq:saturation}
 M_tP^{-1}M_s=\delta_{ts}M_t.
\end{equation}
Consequently $E_t=M_tP^{-1}$ satisfy
$\sum_tE_t=I$, $E_t^2=E_t$, and $E_tE_s=0$ for $t\ne s$.
\end{lemma}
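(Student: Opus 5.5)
The plan is to reduce to the standard fact that if a space is a sum of subspaces whose dimensions add up to the whole dimension, then the sum is direct. Put $E_t=M_tP^{-1}$. Then $\sum_tE_t=I$ and $\rank E_t=\rank M_t$, since $P^{-1}$ is invertible. Write $V_t=\im E_t=\im M_t\leq k^n$.

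\emph{Step 1: the images form a direct sum.} For every $x\in k^n$ we have $x=\sum_tE_tx$, so $k^n=\sum_tV_t$. This gives
\[
n=\dim\Bigl(\sum_tV_t\Bigr)\leq\sum_t\dim V_t=n .
\]
Equality holds in the subadditivity of dimension, which forces $k^n=\bigoplus_tV_t$. To see this, pick a basis of each $V_t$. The union of these bases has $n$ vectors and spans $k^n$, so it is a basis. In particular, the components of any vector in this decomposition are unique.

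\emph{Step 2: the $E_t$ are the projections of this decomposition.} Fix $s$ and $y\in V_s$. On one hand $y=\sum_tE_ty$, with $E_ty\in V_t$. On the other hand, $y$ itself is a decomposition with $y$ in slot $s$ and zero in every other slot. Uniqueness of components then gives $E_sy=y$ and $E_ty=0$ for $t\neq s$. Now apply this to $y=E_sx$ for arbitrary $x$. This yields $E_s^2=E_s$ and $E_tE_s=0$ for $t\ne s$, that is, $E_tE_s=\delta_{ts}E_s$.

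\emph{Step 3: translate back to the $M_t$.} From the definition, $E_tE_s=M_tP^{-1}M_sP^{-1}$. Multiplying the relation of Step 2 on the right by $P$ gives $M_tP^{-1}M_s=\delta_{ts}M_s$. When $t=s$ this is $M_t$, and otherwise it is zero, which is exactly~\eqref{eq:saturation}. The listed consequences for the $E_t$ are the statements already obtained in Steps 1 and 2.

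The only point needing care is the equality case in Step 1, namely the basis-counting argument that converts the dimension count into directness. This works verbatim over any field, so no hypothesis on $k$ is needed. Everything else is formal.
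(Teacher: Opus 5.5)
Your proof is correct and is essentially the paper's argument: the paper also obtains a direct sum of the images $\im M_t$ from the dimension count, then applies uniqueness of components to $\sum_t M_t z = M_s x$ with $z=P^{-1}M_s x$, which is your Step~2 computation without first normalizing to $E_t=M_tP^{-1}$. The only difference is the order. The paper derives the identities for the $M_t$ first and then right-multiplies by $P^{-1}$, whereas you derive the idempotent relations first and then right-multiply by $P$.
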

\begin{proof}
Let $U_t=\im M_t$. Since $P$ is surjective and is the sum of the
$M_t$, the subspaces $U_t$ span $k^n$. Their dimensions sum to $n$,
so the sum is direct. For $x\in k^n$, put $z=P^{-1}M_sx$. Then
\[
 \sum_t M_tz=Pz=M_sx\in U_s.
\]
Uniqueness of components in $\bigoplus_tU_t$ gives $M_tz=0$ if
$t\ne s$ and $M_sz=M_sx$. This is~\eqref{eq:saturation}.
Right multiplication by $P^{-1}$ gives the idempotent identities;
their sum is $PP^{-1}=I$.
\end{proof}

Orthogonality here means pairwise annihilation, not orthogonality for
an inner product. The lemma allows zero summands and arbitrary fields.

\begin{lemma}[Product formula]\label{lem:product}
For the convention~\eqref{eq:tensor}, and arbitrary $A,B,C,A',B',C'
\in\Mat$,
\begin{equation}\label{eq:product}
 M(A,B,C)P^{-1}M(A',B',C')
   =M(AB'C^TA',B,C').
\end{equation}
\end{lemma}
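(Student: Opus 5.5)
Since $P$ is a permutation matrix, $P^{-1}=P^T$. I will therefore prove the identity entrywise, using the explicit descriptions of $P$ and of $M(A,B,C)$ from~\eqref{eq:M}. No rank or structural input is needed, only bookkeeping of the split indices. With row-major vectorization,
\[
 M(A,B,C)[(i,b),(j,c)]=A_{ij}\,\operatorname{vec}(B)_b\,\operatorname{vec}(C)_c,
\]
and $\operatorname{vec}(B)_{3u+v}=B_{uv}$. From the description of $P$, the only nonzero entries of $P^T$ are
\[
 P^T[(u,3k+v),(k,3u+v)]=1\qquad(0\le u,k,v<3).
\]

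The key step is to expand the $((i,b),(j',c'))$ entry of $M(A,B,C)\,P^T\,M(A',B',C')$ as a double sum over intermediate indices $(j,c)$ and $(k,d)$. The permutation structure collapses this double sum to a sum over the triples $(u,k,v)$ with $j=u$, $c=3k+v$ and $d=3u+v$. The left factor contributes $A_{iu}\,B_b\,\operatorname{vec}(C)_{3k+v}=A_{iu}B_bC_{kv}$. The right factor contributes $A'_{kj'}\,\operatorname{vec}(B')_{3u+v}\,\operatorname{vec}(C')_{c'}=A'_{kj'}B'_{uv}C'_{c'}$. Pulling out the factor $\operatorname{vec}(B)_b\operatorname{vec}(C')_{c'}$, which does not involve the summation indices, leaves
\[
 \sum_{u,v,k}A_{iu}B'_{uv}(C^T)_{vk}A'_{kj'}=(AB'C^TA')_{ij'}.
\]
This is exactly the $((i,b),(j',c'))$ entry of $M(AB'C^TA',B,C')$, which proves~\eqref{eq:product}.

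The main obstacle is purely notational. One must get the index roles in $P$ right: which coordinate of the split pair carries the first-factor row index, and which the column index. One must also confirm that the transpose falls on $C$ rather than $B'$. To guard against an off-by-transpose error, I will sanity-check the formula in a degenerate case. The expectation is that summing the identity over a full decomposition should reproduce $PP^{-1}P=P$, in the sense that $P=\sum_t M_t$ is consistent with the matrix-multiplication convention $\tr(XYZ^T)$. Alternatively, one can check it on matrix units $A=E_{ij}$, $B'=E_{jk}$, $C=E_{ik}$, where the coefficient pattern of~\eqref{eq:tensor} should make $AB'C^T$ equal to $E_{ii}$.
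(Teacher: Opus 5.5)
Your proposal is correct and is essentially the paper's proof: both expand the product entrywise using $P^{-1}=P^T$ and the permutation structure of $P$, factor out $\operatorname{vec}(B)_b\operatorname{vec}(C')_{c'}$, and identify the remaining sum as $(AB'C^TA')_{ij'}$. The only difference is bookkeeping: the paper writes $P^{-1}$ through the tensor coefficients $T[3m+k,e,d]$ and evaluates the inner sum as $(B'C^T)[k,m]$, whereas you use the explicit entries $P^T[(u,3k+v),(k,3u+v)]=1$ directly; your index checks are consistent with the paper's description of $P$.
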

\begin{proof}
Since $P$ is a permutation matrix, $P^{-1}=P^T$. For composite
indices as in Section~\ref{sec:geometry},
\[
 P^{-1}[(k,d),(m,e)]=T[3m+k,e,d].
\]
The $((i,b),(j,c))$ entry of the left-hand side
of~\eqref{eq:product} is therefore $B[b]C'[c]$ times
\[
 \sum_{k,m=0}^2 A[i,k]A'[m,j]
       \sum_{d,e=0}^{8} C[d]T[3m+k,e,d]B'[e].
\]
The inner sum has nonzero tensor coefficients only at
$e=3k+h$, $d=3m+h$, for $0\leq h<3$, so it equals
\[
 \sum_{h=0}^2 C[m,h]B'[k,h]=(B'C^T)[k,m].
\]
The full entry is thus
$B[b]C'[c](AB'C^TA')[i,j]$, as claimed.
\end{proof}

If $B$ and $C$ are nonzero, the map $X\mapsto M(X,B,C)$ is
injective. To see this, choose entries $B[b]\ne0$ and $C[c]\ne0$.
The entries with these fixed $b,c$ recover all entries of $X$ up to
the same nonzero scalar. In particular $M(X,B,C)=0$ implies $X=0$.

\begin{proposition}[Invertibility obstruction at saturation]
\label{prop:invertibility}
In a decomposition of $\T$ with nonzero factors and
$\sum_t\rank A_t=27$, at most one first factor is invertible.
\end{proposition}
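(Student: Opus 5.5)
The plan is to combine Lemma~\ref{lem:saturation} with Lemma~\ref{lem:product}, following the outline in the introduction. Let $M_t=M(A_t,B_t,C_t)$. Since $P=\Phi_A(\T)=\sum_tM_t$ is an invertible $27\times27$ matrix and $\rank M_t=\rank A_t$ for nonzero $B_t,C_t$, the hypothesis $\sum_t\rank A_t=27$ is exactly the saturation hypothesis. Lemma~\ref{lem:saturation} then gives $M_tP^{-1}M_s=\delta_{ts}M_t$ for all $s,t$. By Lemma~\ref{lem:product} these identities become
\[
 M(A_tB_sC_t^TA_s,B_t,C_s)=\delta_{ts}M(A_t,B_t,C_t).
\]

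First use the diagonal case $s=t$: $M(A_tB_tC_t^TA_t,B_t,C_t)=M(A_t,B_t,C_t)$. The map $X\mapsto M(X,B_t,C_t)$ is linear and, as noted just before the proposition, injective for nonzero $B_t,C_t$. Hence
\[
 A_tB_tC_t^TA_t=A_t .
\]
If $A_t$ is invertible, cancelling $A_t$ on both sides gives $B_tC_t^T=A_t^{-1}$. In particular $B_t$ and $C_t$ are both invertible, since each is a $3\times3$ factor of a rank-three product.

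Next use the off-diagonal case $s\ne t$. Here $M(A_tB_sC_t^TA_s,B_t,C_s)=0$. Since $B_t$ and $C_s$ are nonzero, the same injectivity argument (with the pair $B_t,C_s$) gives
\[
 A_tB_sC_t^TA_s=0 .
\]
Now suppose two distinct indices $t\ne s$ both have $A_t,A_s$ invertible. By the diagonal step, $B_s$ and $C_t$ are then invertible too. So the left-hand side is a product of four invertible matrices, which is invertible and therefore nonzero. This contradiction shows that at most one first factor is invertible.

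The only point needing care is the injectivity in the off-diagonal case. The injectivity remark in the paper is stated for a pair of nonzero matrices, not just for the pair belonging to one term, so applying it to the mixed pair $B_t,C_s$ is legitimate; this relies on the hypothesis that all factors are nonzero. Everything else is routine bookkeeping.
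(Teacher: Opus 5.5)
Your proof is correct and is essentially the paper's own argument. The paper likewise combines Lemma~\ref{lem:saturation} with Lemma~\ref{lem:product} and injectivity of $X\mapsto M(X,B,C)$ to get the diagonal and off-diagonal identities. It then deduces $B_tC_t^T=A_t^{-1}$ for invertible $A_t$ and rules out two invertible first factors because a product of four invertible matrices cannot vanish. Your remark that injectivity applies to the mixed pair $B_t,C_s$ makes explicit a point the paper leaves implicit.
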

\begin{proof}
Set $M_t=M(A_t,B_t,C_t)$. Their sum is $P$, and their ranks sum to
$27$. Lemmas~\ref{lem:saturation} and~\ref{lem:product}, followed by
the preceding injectivity observation, give
\begin{align}
 A_tB_tC_t^TA_t&=A_t,\label{eq:diagonal}\\
 A_tB_sC_t^TA_s&=0\quad(t\ne s).\label{eq:offdiagonal}
\end{align}
For an invertible $A_t$, multiplying~\eqref{eq:diagonal} on both
sides by $A_t^{-1}$ yields
\begin{equation}\label{eq:inverse}
 B_tC_t^T=A_t^{-1}.
\end{equation}
Thus both $B_t$ and $C_t$ are invertible. If $A_t$ and $A_s$ are
invertible at two distinct indices, all four factors in
$A_tB_sC_t^TA_s$ are invertible by~\eqref{eq:inverse}. Their product
cannot be zero, contradicting~\eqref{eq:offdiagonal}.
\end{proof}

\begin{proof}[Proof of Theorem~\ref{thm:main}]
The line quotient bounds exclude minimal decompositions shorter than
$20$. If a minimal length-$20$ decomposition existed,
Proposition~\ref{prop:profile} would give rank sum $27$ and three
invertible first factors. Proposition~\ref{prop:invertibility} permits
at most one. Hence there is no decomposition of length at most $20$,
and~\eqref{eq:algorithm} gives the asserted lower bound for bilinear
algorithms.
\end{proof}

\section{Saturated decompositions and further consequences}\label{sec:consequences}

For a length-$r$ decomposition define the $A$-slot excess by
\[
 \Delta_A=\sum_{t=1}^r\rank A_t-27.
\]
The split bound shows that $\Delta_A\geq0$. A decomposition is
\emph{saturated in the $A$-slot} if $\Delta_A=0$.

\begin{corollary}[Zero terms and saturated profiles]\label{cor:profiles}
In an arbitrary length decomposition saturated in the $A$-slot,
including one with zero summands, at most one $A_t$ is invertible.
If its length is $22$ and all $A_t$ are nonzero, its $A$-slot
matrix-rank profile is either
\begin{equation}\label{eq:22profiles}
 (17,5,0)\quad\text{or}\quad(18,3,1).
\end{equation}
Without the nonzero-$A_t$ hypothesis, either these profiles hold or
deleting a zero first factor gives a saturated length-$21$
decomposition.
\end{corollary}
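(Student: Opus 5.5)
The plan has three parts: extend the invertibility obstruction to zero summands, count profiles at length $22$, and handle zero first factors.

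\textbf{Zero summands.} Proposition~\ref{prop:invertibility} assumed nonzero factors only to get $\rank M_t=\rank A_t$ and to cancel the outer factors via injectivity. In general we still have $\rank M_t\leq\rank A_t$, so
\[
 27=\rank P\leq\sum_t\rank M_t\leq\sum_t\rank A_t=27 .
\]
Hence $\sum_t\rank M_t=27$, and Lemma~\ref{lem:saturation} applies verbatim, since it allows zero summands. Suppose $A_t$ is invertible. The term $M_t$ must be nonzero: otherwise $\rank M_t=0<\rank A_t=3$, and the chain above would be strict. So $B_t$ and $C_t$ are nonzero, and injectivity of $X\mapsto M(X,B_t,C_t)$ yields~\eqref{eq:diagonal}, hence~\eqref{eq:inverse}. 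For two invertible indices $t\neq s$ both terms are nonzero, and Lemma~\ref{lem:product} with the injectivity for the pair $(B_t,C_s)$ yields~\eqref{eq:offdiagonal}. The contradiction then follows exactly as before. The one point needing care is that injectivity in the off-diagonal identity uses $B_t\neq0$ and $C_s\neq0$; both hold.

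\textbf{Counting at length $22$.} With all $A_t$ nonzero we have $n_1+n_2+n_3=22$ and $n_1+2n_2+3n_3=27$, so $n_2+2n_3=5$. This gives $(n_2,n_3)\in\{(5,0),(3,1),(1,2)\}$. The case $n_3=2$ is excluded by the first part, leaving $(17,5,0)$ and $(18,3,1)$.

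\textbf{Zero first factors.} If some $A_t=0$, deleting that term leaves $\T$ unchanged and does not change $\sum\rank A_t$. The result is a saturated decomposition of length $21$, which is the stated alternative.

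The only real obstacle is the rank bookkeeping in the first part, namely showing $M_t\neq0$ for invertible $A_t$ so that cancellation is legitimate. Everything else is arithmetic.
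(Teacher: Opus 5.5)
Your proof is correct and follows the paper's argument. The chain $27\leq\sum_t\rank M_t\leq\sum_t\rank A_t=27$ forces every individual rank deficit to vanish. Saturation, the product formula and injectivity then give the invertibility obstruction, and the same count $n_2+2n_3=5$ leaves the two profiles.

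The paper differs only cosmetically: it deletes the terms with $A_t=0$ and then cites Proposition~\ref{prop:invertibility}, instead of rerunning that argument with zero summands. It also notes that two zero first factors would give a length-$20$ decomposition, contrary to Theorem~\ref{thm:main}. That refinement is not required by the stated disjunction.
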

\begin{proof}
For arbitrary factors,
\[
 27\leq\sum_t\rank M(A_t,B_t,C_t)
       \leq\sum_t\rank A_t=27.
\]
Hence each individual rank deficit is zero. In particular if $A_t\ne0$,
neither $B_t$ nor $C_t$ can be zero. Removing all terms with $A_t=0$
leaves nonzero factors and preserves the rank sum, so
Proposition~\ref{prop:invertibility} applies.

For $22$ nonzero first factors, write $n_j$ for the count of rank $j$.
Then $n_1+n_2+n_3=22$ and $n_1+2n_2+3n_3=27$, whence
$n_2+2n_3=5$. Since $n_3\leq1$, the only possibilities
are~\eqref{eq:22profiles}. If a first factor is zero, deletion leaves
length $21$ and the same rank sum. Two such zero factors would produce
length $20$, contrary to Theorem~\ref{thm:main}.
\end{proof}

The cyclic symmetry
\[
 (A,B,C)\longmapsto(B,C^T,A^T)
\]
follows by relabelling $i,j,k$ in~\eqref{eq:tensor}, and its third
iterate is the identity. Transpose preserves matrix rank. Thus the
split bound, the invertibility obstruction, and
Corollary~\ref{cor:profiles} hold separately in the $B$- and $C$-slots.
Saturation in one slot does not imply saturation in either other slot.

\subsection{Positive rank excess}
At length $21$, the three occupation capacities driving the
length-$20$ argument change as follows:
\begin{center}
\begin{tabular}{lcc}
\toprule
Subspace family & Length $20$ & Length $21$\\
\midrule
Lines with bound $19$ & $1$ & $2$\\
All-high planes with bound $19$ & $1$ & $2$\\
Affine-plane spans with bound $17$ & $3$ & $4$\\
\bottomrule
\end{tabular}
\end{center}
Repeated first factors and two high-rank factors in an all-high plane
are no longer immediately excluded. Four coset labels forming an affine
plane are also allowed. The argument therefore no longer forces the
profile or rank equality. At positive excess the hypotheses of
Lemma~\ref{lem:saturation} are unavailable.

The excess has a precise linear-algebraic meaning. For nonzero
summands set $U_t=\im M_t$. The addition map
\[
 \Sigma:\bigoplus_t U_t\longrightarrow\F^{27},\qquad
 (u_t)_t\longmapsto\sum_t u_t
\]
is surjective because $\sum_tM_t=P$ is invertible. Rank--nullity gives
\begin{equation}\label{eq:excess-kernel}
 \dim\ker\Sigma=\sum_t\dim U_t-27=\Delta_A.
\end{equation}
Zero excess means that components are unique, which is exactly what
produces the identities in Lemma~\ref{lem:saturation}. Positive excess
measures the remaining linear relations between these components.
Extending the method to length $21$ therefore requires stronger
occupation bounds or cross-factor relations that remain effective
when this kernel has small positive dimension.
The interval $21$--$23$ leaves both improved constructions and stronger
lower bounds to be determined.

\subsection{Global frozen-table soundness}
The full catalogue extends the selected quotient bounds to a single
lower-bound function on all subspaces of $V$.

\begin{proposition}\label{prop:global}
For every frozen representative and every subspace $W\leq V$,
\[
 R_{\F}(T_{W_i})\geq\ell_i\quad(0\leq i<496),
 \qquad R_{\F}(T_W)\geq L_0(W).
\]
\end{proposition}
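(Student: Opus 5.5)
The plan has two stages. First establish $R_{\F}(T_{W_i})\geq\ell_i$ for each of the $496$ representatives. Then transport these bounds to arbitrary subspaces via the definition \eqref{eq:L0}.

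The second stage is formal once the first is done. Fix $W\leq V$. If no $g\in G$ and index $i$ satisfy $W=gW_i$, then $L_0(W)=0$ and there is nothing to prove. Otherwise choose $g,i$ attaining the maximum in \eqref{eq:L0}. Each $g\in G$ extends to a tensor symmetry by \eqref{eq:symmetry} and \eqref{eq:transpose}, so the equality case of \eqref{eq:quotient-monotonicity} (with $W=gW_i$) gives
\[
R_{\F}(T_W)=R_{\F}(T_{gW_i})=R_{\F}(T_{W_i})\geq\ell_i=L_0(W).
\]
Neither orbit coverage nor label consistency is needed here, because the maximum already picks a witnessing pair.

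For the first stage I would proceed by induction along a dependency order on the catalogue. Roughly, this runs from large $W_i$ (small quotients) down to small $W_i$, since every source used for $W_i$ is a superspace $U\geq gW_i$ whose bound is transported by \eqref{eq:quotient-monotonicity}. Each representative then falls into one of three cases.
\begin{enumerate}
\item A base case, handled by contraction \eqref{eq:contraction-bound} or by a nonsingular flattening minor with a checked left inverse.
\item A substitution bound of the form $k+f$, as in Appendix~\ref{app:source-substitution}, with the survival condition checked.
\item An occupation certificate. Here a weighted cover \eqref{eq:weighted-bound} or an integer branch tree with Farkas leaves \eqref{eq:farkas}, as in Proposition~\ref{prop:integer}, excludes length $\ell_i-1$, and hence every shorter length, using only already proved source bounds.
\end{enumerate}
The eight all-high planes need special care, since their catalogue label is $\ell_i=18$ while Corollary~\ref{cor:raises} already proves $19$. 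The frozen label is therefore implied a fortiori. Similarly, the line labels $19$ and $L_0(0)=20$ follow from Proposition~\ref{prop:frozen}(i) and Theorem~\ref{thm:main}, with Theorem~\ref{thm:main} giving in fact $21$.

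The main obstacle is this first stage. The difficulty is not mathematical depth, because each row is an instance of the three rules, but ensuring that the dependency graph is acyclic. In particular, no certificate may cite a bound for a space whose proof in turn requires it; the zero-forcing rows with bound $\geq18$ are especially prone to this. I would handle it by assigning every representative a rank in a topological order: sort first by codimension, then by label. The Lean checker then verifies the finitely many certificates in that order, checking for each cited superspace the basis containment $gW_i\leq U$ and the arithmetic of the cover or Farkas combination. Once every $\ell_i$ is certified, the transport step above yields the statement for all $W$.
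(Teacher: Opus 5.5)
Your proposal is correct and follows essentially the same route as the paper. You build a closed proof at each catalogue representative from contraction, substitution, occupation certificates and containment/symmetry transports, with $\ell_{495}=20$ supplied by Theorem~\ref{thm:main} since $T_0=\T$. You then pass to arbitrary $W$ by symmetry invariance of quotient rank and the maximum in~\eqref{eq:L0}, which, as you correctly observe, needs neither orbit coverage nor label consistency.
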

\begin{proof}[Computer-assisted proof]
The completed registry supplies a closed proof at each of the $496$
listed bases and labels. Its bindings check equal spans, containment
with quotient monotonicity, or invertible tensor transport to the
individual source proofs. The entry $W_{495}=0$, labelled $20$, may
now use Theorem~\ref{thm:main}, since $T_0=\T$. For each $G$-image,
tensor symmetry preserves the quotient rank. Taking the maximum in
\eqref{eq:L0} gives the second inequality.
\end{proof}

With universal soundness available, there is also a direct link from
any short quotient decomposition to the full occupation system. At
length $r\leq18$, its nonzero first-factor multiplicities satisfy the
caps $r-L_0(U)$. Increasing any one multiplicity by $18-r$ preserves
every enlarged cap $18-L_0(U)$ and makes the total $18$. This explains
the padding argument for the full systems used in the research report.
The proof above instead extracts the retained system from separately
established source bounds, as in the Lean main-theorem dependency graph.

\section{Lean formalization}\label{sec:formalization}

The Lean~4 formalization~\cite{lean4} follows the mathematical argument from the definition
of multiplication to the lower bound. It includes the finite
computations that establish the quotient premises, as well as the
geometric and algebraic deductions. In particular, the statements connecting quotient tensors to finite
integer systems are proved together with the certificate exclusions,
so the same formal argument covers their mathematical interpretation
and their application to the main theorem.

\subsection{Mathematical statements}
The field is represented by \code{ZMod 2}, and each factor is an
actual $3\times3$ matrix. The entrywise definition
of~\eqref{eq:tensor} is proved equivalent to the algorithm
identity~\eqref{eq:algorithm} for every pair of input matrices.
The final declaration \code{bilinear\_mul\_requires\_21} states that
every such algorithm has at least $21$ products. Zero summands are
allowed in the quantified family; no minimality, distinctness, or
symmetry condition is assumed.

The tensor-rank statement is \code{QiushiMatmul.rank\_ge\_21}.
Its proof combines the structural argument with a closed collection of
finite premises: bound $19$ for the three line types and the eight
all-high plane types, and bound $17$ for the fourteen normalized
affine-plane spans. The explicit upper-bound witness then gives
\code{rank\_between\_21\_and\_23}. These declarations formalize
Theorem~\ref{thm:main} and its stated interval.

\subsection{Finite proofs and reusable components}
The integer certificates of Section~\ref{sec:finite} are expanded
into Lean proofs. Each leaf proves a contradiction by an exact
nonnegative combination of inequalities. Each internal node applies an
exhaustive integer split. Quotient membership and the source rank
bounds are proved before these certificates are applied to a
hypothetical decomposition. Thus the final theorem derives its finite
premises within the same proof system as the saturation argument.

The development uses Mathlib's finite fields, matrices, subspaces,
linear maps, dimension theory, and rank
identities~\cite{mathlib}. The problem-specific components include
quotient/annihilator equivalence, occupation extraction, symmetry
transport, affine-coset geometry, and the multiplication-specific
product formula. Rank-additive saturation is formulated over any
field. It can be reused independently of the binary finite data.
The complete frozen-table theorem proves all $496$ representative
bounds and transports them to arbitrary subspaces; its logical place
is after the main theorem, as in
Proposition~\ref{prop:global}.

\subsection{Kernel verification}
The final theorems and their full transitive proof dependencies passed
Lean's unchanged kernel replay procedure from empty
environments~\cite{lean-replay}. Their axiom dependencies are confined to
\[
 \{\code{propext},\ \code{Classical.choice},\ \code{Quot.sound}\}.
\]
The main theorem uses no additional axiom for a quotient bound and no
external solver-success assumption. Its finite proofs are checked
proof terms, not trusted outputs of the certificate generators.

Appendix~\ref{app:formalization} lists the principal interfaces, the
pinned dependencies, and the replay groups.
The accompanying formalization package~\cite{qiushi-lean} contains the
sources, certificate data, generators, and commands for reproduction.
Neither a running Qiushi Engine nor access to a language model is
needed to check the proof.

\section{Qiushi Engine and the development of the proof}\label{sec:development}

Qiushi Engine developed the structural proof through sustained
autonomous research on exact matrix multiplication. Its research
planning, construction, computation, and critical review operate with
persistent mathematical memory, as described in its earlier work on
an optical experimental platform~\cite{qiushi-optics}. Here the system
began with the search for a shorter algorithm. The eventual lower-bound
proof grew from a different question exposed by that search: which
constraints on candidate factors express compatibility with the full
multiplication tensor?

The public research trajectory~\cite{qiushi-report} brings together
the relevant notes, programs, results, and corrections. It is organized
by mathematical question rather than as a transcript or a strictly
linear chronology. The following three developments explain how the
representations changed and what was retained from each.

\subsection{From numerical dependence to an exact deletion criterion}
A general $22$-term ansatz contains $22(9+9+9)=594$ coefficients
subject to $729$ cubic tensor-coordinate identities. The early search
also considered real-coefficient deformations of known $23$-term
schemes. The aim was to reach a configuration in which one product
could be eliminated. A nearly dependent first--second factor pairing
initially appeared to offer such a route.

The difficulty can be stated exactly. For a numerical decomposition
$\mathcal D=\sum_{t=0}^{22}u_t\otimes v_t\otimes w_t$ over
$\mathbb R$, normalize the nonzero pairing columns by setting
\[
 d_t=\frac{u_t\otimes v_t}{\|u_t\otimes v_t\|_2},\qquad
 \widetilde w_t=\|u_t\otimes v_t\|_2 w_t,\qquad
 D=[d_0\ \cdots\ d_{22}].
\]
Let $\lambda$ be a unit right singular vector for
$\sigma_{\min}(D)$, and choose $k$ with $\lambda_k\ne0$.
If $D\lambda$ were zero, the $k$-th pairing column could be
eliminated by replacing each remaining output vector with
$\widetilde w_t-(\lambda_t/\lambda_k)\widetilde w_k$.
For an approximate dependence the same operation gives a $22$-term
tensor $\mathcal D_{22}$, with exact error relative to the current
tensor
\begin{equation}\label{eq:deletion-diagnostic}
 \|\mathcal D-\mathcal D_{22}\|_F
   =\frac{\|D\lambda\|_2\,\|\widetilde w_k\|_2}{|\lambda_k|}
   =\frac{\sigma_{\min}(D)\,\|\widetilde w_k\|_2}{|\lambda_k|}.
\end{equation}
Thus a small pairing singular value alone is insufficient. The output
factor may grow at the compensating rate.

To compare the result with the target tensor, write
$E=\|\mathcal D-\mathcal D_{22}\|_F$ and
$\varepsilon=\|\T-\mathcal D\|_F$, using the real tensor with the
coefficients in~\eqref{eq:tensor}. The triangle inequalities give
\begin{equation}\label{eq:deletion-target}
 |E-\varepsilon|\leq\|\T-\mathcal D_{22}\|_F
                  \leq E+\varepsilon.
\end{equation}
Table~\ref{tab:continuation} reports both quantities for a continuation
from the integer $23$-term scheme \code{serendipitous\_8d34} in
Perminov's FastMatrixMultiplication collection~\cite{perminov-code}.
The output vectors were transposed from the source's column-major
convention to~\eqref{eq:tensor}; the source version and conversion
are specified in Appendix~\ref{app:trajectory-sources}.
The parameter $t$ labels saved continuation points, not elapsed time.
The same column, labelled $10$ in the data, is eliminated at each
point. Across these samples the pairing singular value falls by two
orders of magnitude, but output growth keeps the deletion error of
order one. The target residual remains below $4\times10^{-12}$.
Thus~\eqref{eq:deletion-target} shows that the deleted representation
is still about one unit from the target at each of these points.

\begin{table}[htbp]
\centering
\caption{Pairing dependence and compensating output growth in a
recorded real-coefficient continuation. Here $E$ is the deletion error
in~\eqref{eq:deletion-diagnostic}, and $\varepsilon$ is the target
residual before deletion. Values are rounded from the archived data.}
\label{tab:continuation}
\small
\begin{tabular}{rrrrrr}
\toprule
$t$ & $\sigma_{\min}(D)$ & $|\lambda_{10}|$ & $\|\widetilde w_{10}\|_2$ & $E$ & $\varepsilon$\\
\midrule
10 & $5.265\times10^{-3}$ & $0.7084$ & $1.400\times10^2$ & $1.040$ & $1.997\times10^{-14}$\\
20 & $7.051\times10^{-4}$ & $0.7076$ & $1.064\times10^3$ & $1.061$ & $1.902\times10^{-13}$\\
60 & $4.745\times10^{-5}$ & $0.7075$ & $1.779\times10^4$ & $1.193$ & $3.645\times10^{-12}$\\
\bottomrule
\end{tabular}
\end{table}

Qiushi's correction separated projective proximity from the tensor
contribution it was meant to remove. The retained lesson was an exact
one: a change to the first two factor lists must be assessed together
with the output coefficients. The archive includes the continuation
analysis, its corrected interpretation, and the data used in
Table~\ref{tab:continuation}; the source links are collected in
Appendix~\ref{app:trajectory-sources}.

\subsection{Quotient cores and the completion problem}
The finite-field search made a coordinate quotient a natural target.
Taking $W=\langle E_{00}\rangle$ in the first factor gives an
$8\times9\times9$ core. This is the $E_{11}$ core of the research
report, whose historical name uses one-based indices; the present
paper uses indices $0,1,2$. A $19$-term decomposition of this core,
embedded in the coordinate complement, would need only the three
deleted terms
\begin{equation}\label{eq:core-restoration}
 \sum_{k=0}^2 E_{00}\otimes E_{0k}\otimes E_{0k}
\end{equation}
to recover $\T$. It would therefore give a $22$-product algorithm.
The smaller object retained a direct route to the original problem.

Occupation constraints organized candidate first factors into finite
configurations. Their feasibility, however, did not solve completion.
For fixed first factors $a_t\in V/W$, the remaining obligation is
\begin{equation}\label{eq:core-completion}
 \sum_t a_t B_t[j,k]C_t[i,l]
   =\delta_{kl}\,q_W(E_{ij})
 \qquad(0\leq i,j,k,l<3).
\end{equation}
Each pair $B_t,C_t$ must satisfy all these equations simultaneously.
Occupation records where the $a_t$ lie, but omits these products and
their shared coefficients. Fixed-factor solving, contractions, lift
conditions, and relaxations examined different parts of this gap.

The gap also appeared in verification. An early cardinality encoding
reused auxiliary variables between constraints and produced a
spurious contradiction. Qiushi identified the collision, withdrew the
affected exclusion, and rebuilt the encoding with disjoint counters.
The mathematical requirement was then explicit: every legal
multiplicity vector must extend to a simultaneous Boolean assignment.
This requirement survives in the formal copy and counter semantics
described in Appendix~\ref{app:certificate-realizations}.

The quotient search consequently left two distinct assets: the
restoration formula for upper bounds and valid occupation constraints
for lower bounds. The latter did not have to solve completion directly.
They could instead force a configuration in which the full tensor
itself supplied additional equations.

\subsection{Returning to the full tensor}
The split flattening introduced information absent from a quotient
support: each complete algorithm term contributes an image space of
dimension $\rank A_t$, and these spaces together must span dimension
$27$. After the geometric reduction, the profile $(16,1,3)$ used
exactly this dimension. There was no remaining dimensional overlap.
Qiushi then used the uniqueness of components in that direct sum to
obtain the saturation identities, and expanded their products in the
coordinates of matrix multiplication.

This was a change in the kind of conclusion being sought. Earlier
finite work had considered excluding completions of the surviving
profile by additional support systems. The saturation argument instead
gave a statement about every complete decomposition at rank sum $27$:
at most one first factor is invertible. It replaced the terminal
support exclusions by Proposition~\ref{prop:invertibility}. The
research note on the shortened proof also identified that the proposed
strengthening of orbit $479$ was unnecessary for the coset reduction:
its high-rank pair already has rank-one difference. That step requires
only the eight all-high plane raises. The existing bound
$R_{\F}(T_{W_{479}})\geq17$ still enters the line
cover~\eqref{eq:line-cover}.

The same statement then returned information to the constructive
problem. It restricts saturated $22$-term algorithms through
Corollary~\ref{cor:profiles}, while positive rank excess identifies
where the argument must be extended.

\subsection{Research memory and formal development}
Qiushi Engine's Meta-Trace links the history of questions, the evidence
for claims, and changes in the research trajectory~\cite{qiushi-optics}.
For this study, the accompanying record contains the deletion
diagnostic, the core restoration map,
the distinction between occupation and completion, the counter
correction, and the saturation argument, together with the notes,
programs, and computed data supporting them. Superseded interpretations
are retained alongside the evidence that corrected them.

The subsequent Lean development and manuscript preparation included
additional Codex assistance in proof completion, integration, and
verification. Formalization preserved the structural argument while
rebuilding its finite premises as proved quotient statements and
integer certificates. The library also includes the occupation,
substitution, and saturation lemmas used in the proof.

\appendix
\section{Explicit finite premises}\label{app:finite-data}

\subsection{Matrix codes and representative planes}
For $0\leq v<512$ define $\operatorname{mat}(v)_{ij}$ to be the bit
at position $3i+j$ of $v$. Thus the least significant bit represents
$E_{00}$, and code $272$ represents $\operatorname{diag}(0,1,1)$.
A displayed tuple of codes denotes their $\F$-linear span. The frozen
RREF convention chooses pivots from bit $8$ down to bit $0$.
The following ranks are the sorted ranks of the three nonzero elements,
not necessarily the ranks in basis order.

\begin{table}[htbp]
\centering
\caption{The eight all-high planes. Every inherited label is $18$;
every improved quotient lower bound is $19$.}
\label{tab:planes}
\begin{tabular}{rcrr}
\toprule
Index & Basis codes & Rank multiset & Orbit size\\
\midrule
484 & $(19,10)$ & $(2,2,2)$ & 98\\
485 & $(20,10)$ & $(2,2,2)$ & $2{,}352$\\
486 & $(68,10)$ & $(2,2,2)$ & $1{,}176$\\
487 & $(84,10)$ & $(2,3,3)$ & $3{,}528$\\
488 & $(96,10)$ & $(2,2,3)$ & $14{,}112$\\
489 & $(258,10)$ & $(2,2,2)$ & $4{,}704$\\
490 & $(275,10)$ & $(2,3,3)$ & $4{,}704$\\
491 & $(163,84)$ & $(3,3,3)$ & $1{,}344$\\
\bottomrule
\end{tabular}
\end{table}

Their orbit sizes sum to $32{,}018$. There are $43{,}435$ two-planes
in total. The six other orbit representatives have bases
$(2,1)$, $(10,1)$, $(16,1)$, $(20,1)$, $(84,1)$, and $(160,1)$,
with indices $478$--$483$. Each contains a rank-one matrix and is
outside the hypothesis of Corollary~\ref{cor:raises}.

The total number of subspaces is obtained by counting ordered bases:
each $d$-space has $\prod_{j=0}^{d-1}(2^d-2^j)$ ordered bases,
whereas $V$ has $\prod_{j=0}^{d-1}(2^9-2^j)$ independent
ordered $d$-tuples. Thus
\[
 \sum_{d=0}^9 \qbinom{9}{d}=8{,}283{,}458,
 \qquad
 \qbinom{n}{d}=\prod_{j=0}^{d-1}\frac{2^n-2^j}{2^d-2^j}.
\]
Similarly $\qbinom{9}{2}=43{,}435$, and
$\sum_{d=1}^6\qbinom{7}{d}=29{,}210$. These count actual subspaces;
the formal orbit classification supplies the separate coverage claim.

\subsection{The normalized hyperplanes}
The spaces $S=(272,4,2,1)$ and $R=(4,2,1)$ have dimensions four
and three. Their frozen values are $14$ and $15$. The other fourteen
hyperplanes of $S$ are listed below; each has frozen value $17$.
The index is the suffix in the formal name \code{affineHyperplane}.
\begin{center}
\begin{tabular}{rc@{\qquad}rc}
\toprule
Index & Basis codes & Index & Basis codes\\
\midrule
0 & $(272,4,2)$ & 7 & $(276,2,1)$\\
1 & $(273,4,2)$ & 8 & $(272,5,2)$\\
2 & $(272,4,1)$ & 9 & $(273,5,2)$\\
3 & $(274,4,1)$ & 10 & $(272,6,1)$\\
4 & $(272,4,3)$ & 11 & $(274,6,1)$\\
5 & $(273,4,3)$ & 12 & $(272,5,3)$\\
6 & $(272,2,1)$ & 13 & $(273,5,3)$\\
\bottomrule
\end{tabular}
\end{center}
Equivalently these are
$\{\alpha p+e_0w^T:a\cdot w=b\alpha\}$ for
$a\in\F^3\setminus0$ and $b\in\F$. This formula proves that
the list exhausts the fourteen affine-plane spans used in the proof.

The line representatives in the small premise interface are
$(1)$, $(17)$, and $(273)$, of matrix ranks one, two, and three.
They are respectively equivalent to the frozen line representatives
$(1)$, $(10)$, and $(84)$, all with label $19$.

\subsection{Integer branch-certificate sizes}
Table~\ref{tab:branches} describes the Lean integer branch certificates.
A zero-forcing row has target-level
bound $18$ and removes its quotient directions by nonnegativity.
Source rows retain their proved labels in~\eqref{eq:retained}.
All systems begin with $127$ nonzero directions and $29{,}210$ rows;
retaining fewer rows is a weakening, so infeasibility of the retained
system proves infeasibility of the full one. The coefficient tests,
row bindings, and exhaustive branches are part of the formal proof.
\begin{table}[htbp]
\centering
\caption{Integer branch certificates after zero-weight elimination.}
\label{tab:branches}
\begin{tabular}{rrrrr}
\toprule
Plane & Live variables & Source rows & Zero-forcing rows & Leaves\\
\midrule
484 & 43 & 1{,}665 & 84 & 569\\
485 & 41 & 421 & 86 & 40\\
486 & 41 & 262 & 86 & 14\\
487 & 48 & 668 & 79 & 89\\
488 & 41 & 413 & 86 & 32\\
489 & 32 & 371 & 95 & 43\\
490 & 43 & 713 & 84 & 115\\
491 & 49 & 726 & 78 & 120\\
\bottomrule
\end{tabular}
\end{table}

\subsection{Substitution and a source bound of twelve}
\label{app:source-substitution}
The substitution rule used for the source bounds has a useful
formulation in terms of slices. Let $D$ be a finite-dimensional
$\F$-vector space, let $\beta,\gamma$ be finite index sets, and write
a tensor as $S:\beta\times\gamma\to D$. Its second-factor slices
are the functions $S_b:c\mapsto S(b,c)$.

\begin{lemma}[Substitution with a surviving flattening]
\label{lem:source-substitution}
Choose $b_1,\ldots,b_k\in\beta$ such that the slices $S_{b_l}$
are linearly independent, and $c_1,\ldots,c_f\in\gamma$.
Suppose that for every nonzero $y\in\F^f$ there exist $b\in\beta$
and $\phi\in D^*$ satisfying
\begin{equation}\label{eq:substitution-witness}
 \phi\left(\sum_j y_j S(b_l,c_j)\right)=0\quad(1\leq l\leq k),
 \qquad
 \phi\left(\sum_j y_j S(b,c_j)\right)\ne0.
\end{equation}
Then $R_{\F}(S)\geq k+f$.
\end{lemma}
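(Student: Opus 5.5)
We argue by induction on $k$, in the standard substitution style. Take a decomposition $S=\sum_{t=1}^r x_t\otimes y_t\otimes d_t$ with $x_t\in\F^\beta$, $y_t\in\F^\gamma$ and $d_t\in D$, so that $S_b=\sum_t x_t(b)\,y_t\otimes d_t$. The goal is to show $r\geq k+f$.

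\emph{Base case $k=0$.} The hypothesis then says: for every nonzero $y\in\F^f$ there are $b$ and $\phi$ with $\phi\bigl(\sum_j y_jS(b,c_j)\bigr)\ne0$. Consider the flattening whose rows are indexed by pairs $(b,\phi)$, with $\phi$ running over a basis of $D^*$, and whose columns are $c_1,\dots,c_f$. The hypothesis says exactly that this $(\beta\times D^*)\times f$ matrix has trivial kernel, so its rank is $f$. Each simple summand contributes to it a matrix of rank at most one. Hence $r\geq f$.

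\emph{Inductive step.} The slice $S_{b_k}$ is nonzero, so some summand $t_0$ has $x_{t_0}(b_k)\ne0$. For every $b\in\beta$ replace $x_t$ by
\[
 x'_t=x_t-\frac{x_t(b_k)}{x_{t_0}(b_k)}\,x_{t_0}
\]
on the other summands, and drop summand $t_0$. This yields a decomposition of length $r-1$ of the modified tensor
\[
 S'_b=S_b-\lambda_b S_{b_k},\qquad \lambda_b=\frac{x_{t_0}(b)}{x_{t_0}(b_k)}.
\]
By construction $S'_{b_k}=0$. Over $\F$ we have $\lambda_b\in\{0,1\}$, but this is not needed.

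We now check that $S'$ satisfies the hypothesis with $k-1$, using the slices $b_1,\dots,b_{k-1}$ and the same columns.

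\begin{enumerate}
\item \emph{Independence.} The slices $S'_{b_l}=S_{b_l}-\lambda_{b_l}S_{b_k}$ for $l<k$ remain independent. A relation among them would be a relation among $S_{b_1},\dots,S_{b_k}$ with nonzero coefficients on the first $k-1$ slices.
\item \emph{Witness condition.} Given nonzero $y$, take the original witness $(b,\phi)$. The vanishing conditions for $l<k$ survive, since
\[
 \phi\Bigl(\sum_jy_jS'(b_l,c_j)\Bigr)=\phi\Bigl(\sum_jy_jS(b_l,c_j)\Bigr)-\lambda_{b_l}\,\phi\Bigl(\sum_jy_jS(b_k,c_j)\Bigr),
\]
and both terms on the right vanish by \eqref{eq:substitution-witness}. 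The nonvanishing condition also survives:
\[
 \phi\Bigl(\sum_jy_jS'(b,c_j)\Bigr)=\phi\Bigl(\sum_jy_jS(b,c_j)\Bigr)-\lambda_b\cdot 0\ne0.
\]
\end{enumerate}

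Thus the hypothesis holds for $S'$ with $k-1$, and induction gives $r-1\geq(k-1)+f$, that is, $r\geq k+f$.

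The main point needing care is the witness condition. The survival of the nonvanishing condition depends on $\phi$ annihilating the combination $\sum_j y_jS(b_k,c_j)$ for the very slice used in the substitution. This is exactly why \eqref{eq:substitution-witness} demands vanishing against all $k$ selected slices simultaneously.
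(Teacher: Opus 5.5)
Your proposal is correct and follows essentially the paper's proof. Each independent selected slice is substituted away at the cost of one term; you eliminate $b_k$ where the paper eliminates $b_1$, and you divide by $x_{t_0}(b_k)$ instead of using $B_{t_0}(b_1)=1$ over $\F$. The witness survives because $\phi$ annihilates the combination of the substituted slice, and the argument finishes with the flattening having rows $(b,\phi)$ and columns $c_1,\ldots,c_f$, which has rank $f$.
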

\begin{proof}
Consider a decomposition $S(b,c)=\sum_{t=1}^r a_t B_t(b)C_t(c)$.
If $k>0$, the nonzero slice at $b_1$ forces $B_{t_0}(b_1)=1$
for some $t_0$.
Replace the slice family by
\[
 S'(b,c)=S(b,c)+B_{t_0}(b)S(b_1,c).
\]
The $t_0$-th term cancels, leaving a decomposition with at most $r-1$
terms. The other selected slices remain independent: a relation among
them would give a relation among the original $k$ selected slices,
with the coefficient of $S_{b_1}$ adjusted accordingly.
For a witness in~\eqref{eq:substitution-witness}, the added multiple
of $S_{b_1}$ vanishes under $\phi$. Hence the same condition holds
for $S'$ and its $k-1$ remaining selected slices.

Iterate $k$ times. In the final family, no nonzero linear combination
of the $f$ selected columns vanishes, since the surviving witness
detects it at some $b$. These columns are therefore independent in
the flattening with column index $c$ and row space $D\otimes\F^\beta$.
Its rank is at least $f$, whereas a decomposition with at most $r-k$
terms gives rank at most $r-k$. Thus $r\geq k+f$.
\end{proof}

For a source actually used in the formal development, set
\[
 J=\begin{pmatrix}0&0&1\\0&1&0\\1&0&0\end{pmatrix},
 \qquad K=E_{22},\qquad W=H_J\cap H_K.
\]
This is $W_{11}$ in the catalogue. Quotient coordinates are
$X\mapsto(\langle J,X\rangle,\langle K,X\rangle)$, so its slice
family, with $b=(j,k)$ and $c=(i,l)$, is
\begin{equation}\label{eq:source-twelve-slices}
 S((j,k),(i,l))=
    (\delta_{i+j,2}\delta_{kl},\ \delta_{i,2}\delta_{j,2}\delta_{kl}).
\end{equation}
Choose the six slices with $j=0,1$ and all $k$, and the six columns
with $i=0,2$ and all $l$. The slices are independent: in the first
component, each has a different nonzero coordinate $(i,l)=(2-j,k)$.

Write a combination of the selected columns as $y=(a,b)\in\F^3\oplus\F^3$,
where $a$ corresponds to $i=0$ and $b$ to $i=2$.
If $b\ne0$, the second coordinate functional annihilates all six
selected slice evaluations and detects $b_k\ne0$ in a slice $(2,k)$.
If $b=0$ and $a\ne0$, the first coordinate functional instead
annihilates those evaluations and detects $a_k\ne0$ in a slice
$(2,k)$. This proves~\eqref{eq:substitution-witness} for every
nonzero $y$. Lemma~\ref{lem:source-substitution} yields
$R_{\F}(T_{W_{11}})\geq6+6=12$.
This example and~\eqref{eq:three-hyperplane-bound} illustrate the two
complementary operations used in the source proofs: preserving a
flattening through substitution, and combining quotient capacities.

\section{The complete frozen catalogue}\label{app:catalogue}

Table~\ref{tab:catalogue} gives every pair $(W_i,\ell_i)$ used in
definition~\eqref{eq:L0}. Parentheses enclose basis codes in the convention
of Appendix~\ref{app:finite-data}; the empty tuple is the zero space.
The dimension is the number of basis vectors. The labels are frozen
lower bounds, not asserted exact ranks. In particular entries
$484$--$491$ remain labelled $18$ here even though their quotient ranks
are at least $19$.

Together with the explicitly defined group $G$, this catalogue specifies
all eight systems~\eqref{eq:integer-system} without an external lookup
file. For any superspace row $U$, find a $G$-image of a listed basis
spanning $U$ and use its label. The proved coverage and consistency make
this procedure total and independent of the representative found.
The local source package also includes the same catalogue as structured
JSON. Its transcription is checked against all $496$ formal registry
entries and the literal Lean basis and label definitions.

\begingroup
\small
\setlength{\tabcolsep}{4pt}
\renewcommand{\arraystretch}{1.08}
\begin{longtable}{@{}rlr@{\hspace{12mm}}rlr@{}}
\caption{All $496$ frozen representative bases and labels.}\label{tab:catalogue}\\
\toprule
$i$ & Basis of $W_i$ & $\ell_i$ & $i$ & Basis of $W_i$ & $\ell_i$\\
\midrule\endfirsthead
\multicolumn{6}{c}{\tablename\ \thetable: continued}\\
\toprule
$i$ & Basis of $W_i$ & $\ell_i$ & $i$ & Basis of $W_i$ & $\ell_i$\\
\midrule\endhead
\midrule\multicolumn{6}{r}{Continued on the next page}\\\endfoot
\bottomrule\endlastfoot
0 & $(256,128,64,32,16,8,4,2,1)$ & 0 & 248 & $(8,4,2,1)$ & 14\\
1 & $(128,64,32,16,8,4,2,1)$ & 3 & 249 & $(80,4,2,1)$ & 14\\
2 & $(256,160,64,16,8,4,2,1)$ & 6 & 250 & $(16,8,2,1)$ & 15\\
3 & $(256,128,68,32,20,8,2,1)$ & 9 & 251 & $(20,8,2,1)$ & 16\\
4 & $(64,32,16,8,4,2,1)$ & 6 & 252 & $(32,8,2,1)$ & 15\\
5 & $(160,64,16,8,4,2,1)$ & 9 & 253 & $(68,8,2,1)$ & 16\\
6 & $(256,64,16,8,4,2,1)$ & 6 & 254 & $(84,8,2,1)$ & 16\\
7 & $(256,96,16,8,4,2,1)$ & 9 & 255 & $(96,8,2,1)$ & 16\\
8 & $(304,160,64,8,4,2,1)$ & 9 & 256 & $(160,8,2,1)$ & 15\\
9 & $(256,160,80,8,4,2,1)$ & 9 & 257 & $(256,8,2,1)$ & 14\\
10 & $(256,160,68,16,8,2,1)$ & 12 & 258 & $(272,8,2,1)$ & 15\\
11 & $(128,68,32,20,8,2,1)$ & 12 & 259 & $(32,12,2,1)$ & 15\\
12 & $(256,128,32,20,8,2,1)$ & 9 & 260 & $(80,12,2,1)$ & 16\\
13 & $(256,160,68,20,8,2,1)$ & 12 & 261 & $(96,12,2,1)$ & 16\\
14 & $(256,128,96,20,8,2,1)$ & 12 & 262 & $(132,12,2,1)$ & 17\\
15 & $(256,128,84,32,8,2,1)$ & 9 & 263 & $(160,12,2,1)$ & 16\\
16 & $(256,148,80,32,12,2,1)$ & 12 & 264 & $(256,12,2,1)$ & 16\\
17 & $(262,128,68,32,20,10,1)$ & 14 & 265 & $(272,12,2,1)$ & 16\\
18 & $(32,16,8,4,2,1)$ & 9 & 266 & $(80,32,2,1)$ & 15\\
19 & $(64,16,8,4,2,1)$ & 9 & 267 & $(84,32,2,1)$ & 16\\
20 & $(96,16,8,4,2,1)$ & 12 & 268 & $(256,32,2,1)$ & 14\\
21 & $(256,16,8,4,2,1)$ & 9 & 269 & $(264,32,2,1)$ & 16\\
22 & $(160,64,8,4,2,1)$ & 9 & 270 & $(320,32,2,1)$ & 15\\
23 & $(128,80,8,4,2,1)$ & 12 & 271 & $(336,32,2,1)$ & 15\\
24 & $(160,80,8,4,2,1)$ & 11 & 272 & $(152,80,2,1)$ & 15\\
25 & $(256,80,8,4,2,1)$ & 12 & 273 & $(156,80,2,1)$ & 16\\
26 & $(288,128,8,4,2,1)$ & 9 & 274 & $(160,80,2,1)$ & 16\\
27 & $(304,160,8,4,2,1)$ & 11 & 275 & $(160,84,2,1)$ & 16\\
28 & $(280,160,80,4,2,1)$ & 11 & 276 & $(272,96,2,1)$ & 16\\
29 & $(160,68,16,8,2,1)$ & 15 & 277 & $(296,96,2,1)$ & 15\\
30 & $(256,68,16,8,2,1)$ & 12 & 278 & $(304,96,2,1)$ & 16\\
31 & $(68,32,20,8,2,1)$ & 15 & 279 & $(68,16,10,1)$ & 17\\
32 & $(128,32,20,8,2,1)$ & 12 & 280 & $(96,16,10,1)$ & 16\\
33 & $(160,68,20,8,2,1)$ & 14 & 281 & $(256,16,10,1)$ & 16\\
34 & $(256,68,20,8,2,1)$ & 14 & 282 & $(258,16,10,1)$ & 17\\
35 & $(128,96,20,8,2,1)$ & 15 & 283 & $(32,20,10,1)$ & 16\\
36 & $(256,96,20,8,2,1)$ & 13 & 284 & $(68,20,10,1)$ & 17\\
37 & $(256,128,20,8,2,1)$ & 12 & 285 & $(96,20,10,1)$ & 17\\
38 & $(288,128,20,8,2,1)$ & 12 & 286 & $(128,20,10,1)$ & 16\\
39 & $(256,160,20,8,2,1)$ & 12 & 287 & $(160,20,10,1)$ & 17\\
40 & $(128,84,32,8,2,1)$ & 12 & 288 & $(256,20,10,1)$ & 16\\
41 & $(384,84,32,8,2,1)$ & 12 & 289 & $(258,20,10,1)$ & 17\\
42 & $(256,128,32,8,2,1)$ & 9 & 290 & $(68,32,10,1)$ & 17\\
43 & $(272,128,32,8,2,1)$ & 12 & 291 & $(84,32,10,1)$ & 16\\
44 & $(272,132,32,8,2,1)$ & 12 & 292 & $(128,32,10,1)$ & 16\\
45 & $(256,160,68,8,2,1)$ & 13 & 293 & $(132,32,10,1)$ & 16\\
46 & $(288,160,68,8,2,1)$ & 12 & 294 & $(258,32,10,1)$ & 16\\
47 & $(304,160,68,8,2,1)$ & 14 & 295 & $(272,32,10,1)$ & 17\\
48 & $(256,160,84,8,2,1)$ & 12 & 296 & $(320,32,10,1)$ & 16\\
49 & $(256,132,96,8,2,1)$ & 12 & 297 & $(336,32,10,1)$ & 16\\
50 & $(272,132,96,8,2,1)$ & 13 & 298 & $(384,32,10,1)$ & 16\\
51 & $(148,80,32,12,2,1)$ & 14 & 299 & $(386,32,10,1)$ & 16\\
52 & $(256,80,32,12,2,1)$ & 12 & 300 & $(160,68,10,1)$ & 17\\
53 & $(384,80,32,12,2,1)$ & 12 & 301 & $(176,68,10,1)$ & 17\\
54 & $(256,132,32,12,2,1)$ & 12 & 302 & $(304,68,10,1)$ & 17\\
55 & $(272,132,32,12,2,1)$ & 14 & 303 & $(160,84,10,1)$ & 16\\
56 & $(256,160,80,12,2,1)$ & 14 & 304 & $(256,84,10,1)$ & 16\\
57 & $(272,132,96,12,2,1)$ & 14 & 305 & $(258,84,10,1)$ & 17\\
58 & $(256,152,80,32,2,1)$ & 12 & 306 & $(132,96,10,1)$ & 17\\
59 & $(264,152,80,32,2,1)$ & 12 & 307 & $(272,96,10,1)$ & 17\\
60 & $(256,156,80,32,2,1)$ & 12 & 308 & $(288,96,10,1)$ & 16\\
61 & $(264,156,80,32,2,1)$ & 12 & 309 & $(290,96,10,1)$ & 17\\
62 & $(272,156,80,32,2,1)$ & 12 & 310 & $(304,96,10,1)$ & 16\\
63 & $(256,160,68,16,10,1)$ & 14 & 311 & $(384,96,10,1)$ & 16\\
64 & $(258,160,68,16,10,1)$ & 15 & 312 & $(386,96,10,1)$ & 17\\
65 & $(260,160,68,16,10,1)$ & 15 & 313 & $(256,160,10,1)$ & 16\\
66 & $(260,164,68,16,10,1)$ & 12 & 314 & $(258,160,10,1)$ & 17\\
67 & $(262,164,68,16,10,1)$ & 15 & 315 & $(260,160,10,1)$ & 17\\
68 & $(288,164,68,16,10,1)$ & 14 & 316 & $(262,160,10,1)$ & 17\\
69 & $(256,164,96,16,10,1)$ & 15 & 317 & $(288,160,10,1)$ & 15\\
70 & $(258,164,96,16,10,1)$ & 13 & 318 & $(290,160,10,1)$ & 17\\
71 & $(128,68,32,20,10,1)$ & 14 & 319 & $(292,160,10,1)$ & 16\\
72 & $(258,68,32,20,10,1)$ & 15 & 320 & $(304,160,10,1)$ & 16\\
73 & $(322,192,32,20,10,1)$ & 12 & 321 & $(306,160,10,1)$ & 17\\
74 & $(324,192,32,20,10,1)$ & 14 & 322 & $(308,160,10,1)$ & 17\\
75 & $(262,160,68,20,10,1)$ & 16 & 323 & $(192,36,16,1)$ & 15\\
76 & $(288,160,68,20,10,1)$ & 15 & 324 & $(196,36,16,1)$ & 16\\
77 & $(260,128,96,20,10,1)$ & 15 & 325 & $(448,36,16,1)$ & 15\\
78 & $(262,128,96,20,10,1)$ & 14 & 326 & $(450,36,16,1)$ & 15\\
79 & $(258,160,96,20,10,1)$ & 14 & 327 & $(132,96,16,1)$ & 15\\
80 & $(260,160,96,20,10,1)$ & 15 & 328 & $(140,96,16,1)$ & 16\\
81 & $(290,160,96,20,10,1)$ & 15 & 329 & $(164,96,16,1)$ & 16\\
82 & $(292,160,96,20,10,1)$ & 14 & 330 & $(258,96,16,1)$ & 16\\
83 & $(274,144,68,32,10,1)$ & 14 & 331 & $(290,96,16,1)$ & 15\\
84 & $(258,128,84,32,10,1)$ & 14 & 332 & $(296,96,16,1)$ & 16\\
85 & $(262,131,68,35,20,10)$ & 17 & 333 & $(298,96,16,1)$ & 16\\
86 & $(16,8,4,2,1)$ & 12 & 334 & $(140,98,16,1)$ & 16\\
87 & $(64,8,4,2,1)$ & 11 & 335 & $(164,98,16,1)$ & 16\\
88 & $(80,8,4,2,1)$ & 13 & 336 & $(256,98,16,1)$ & 16\\
89 & $(128,8,4,2,1)$ & 12 & 337 & $(258,98,16,1)$ & 16\\
90 & $(160,8,4,2,1)$ & 12 & 338 & $(288,98,16,1)$ & 16\\
91 & $(152,80,4,2,1)$ & 12 & 339 & $(296,98,16,1)$ & 16\\
92 & $(160,80,4,2,1)$ & 13 & 340 & $(258,100,16,1)$ & 16\\
93 & $(68,16,8,2,1)$ & 15 & 341 & $(290,100,16,1)$ & 16\\
94 & $(256,16,8,2,1)$ & 12 & 342 & $(296,100,16,1)$ & 16\\
95 & $(32,20,8,2,1)$ & 15 & 343 & $(298,100,16,1)$ & 16\\
96 & $(68,20,8,2,1)$ & 16 & 344 & $(386,100,16,1)$ & 16\\
97 & $(96,20,8,2,1)$ & 15 & 345 & $(416,100,16,1)$ & 15\\
98 & $(128,20,8,2,1)$ & 15 & 346 & $(424,100,16,1)$ & 16\\
99 & $(160,20,8,2,1)$ & 15 & 347 & $(258,228,16,1)$ & 16\\
100 & $(256,20,8,2,1)$ & 14 & 348 & $(262,228,16,1)$ & 16\\
101 & $(84,32,8,2,1)$ & 15 & 349 & $(292,228,16,1)$ & 15\\
102 & $(128,32,8,2,1)$ & 12 & 350 & $(294,228,16,1)$ & 16\\
103 & $(132,32,8,2,1)$ & 15 & 351 & $(302,228,16,1)$ & 16\\
104 & $(384,32,8,2,1)$ & 12 & 352 & $(136,38,20,1)$ & 16\\
105 & $(160,68,8,2,1)$ & 15 & 353 & $(196,38,20,1)$ & 17\\
106 & $(256,68,8,2,1)$ & 14 & 354 & $(136,96,20,1)$ & 17\\
107 & $(272,68,8,2,1)$ & 15 & 355 & $(164,96,20,1)$ & 17\\
108 & $(160,84,8,2,1)$ & 15 & 356 & $(290,96,20,1)$ & 17\\
109 & $(256,84,8,2,1)$ & 14 & 357 & $(298,96,20,1)$ & 17\\
110 & $(132,96,8,2,1)$ & 15 & 358 & $(386,96,20,1)$ & 17\\
111 & $(256,96,8,2,1)$ & 14 & 359 & $(424,96,20,1)$ & 17\\
112 & $(272,96,8,2,1)$ & 15 & 360 & $(160,100,20,1)$ & 16\\
113 & $(384,96,8,2,1)$ & 14 & 361 & $(298,100,20,1)$ & 17\\
114 & $(256,160,8,2,1)$ & 14 & 362 & $(296,102,20,1)$ & 17\\
115 & $(288,160,8,2,1)$ & 12 & 363 & $(262,160,20,1)$ & 17\\
116 & $(304,160,8,2,1)$ & 14 & 364 & $(264,160,20,1)$ & 16\\
117 & $(80,32,12,2,1)$ & 14 & 365 & $(266,160,20,1)$ & 17\\
118 & $(132,32,12,2,1)$ & 15 & 366 & $(268,160,20,1)$ & 17\\
119 & $(256,32,12,2,1)$ & 14 & 367 & $(270,160,20,1)$ & 17\\
120 & $(272,32,12,2,1)$ & 15 & 368 & $(322,160,20,1)$ & 17\\
121 & $(384,32,12,2,1)$ & 14 & 369 & $(326,160,20,1)$ & 17\\
122 & $(148,80,12,2,1)$ & 15 & 370 & $(334,160,20,1)$ & 16\\
123 & $(160,80,12,2,1)$ & 15 & 371 & $(290,224,20,1)$ & 17\\
124 & $(256,80,12,2,1)$ & 15 & 372 & $(298,224,20,1)$ & 17\\
125 & $(132,96,12,2,1)$ & 15 & 373 & $(300,224,20,1)$ & 17\\
126 & $(256,96,12,2,1)$ & 14 & 374 & $(282,160,84,1)$ & 17\\
127 & $(272,96,12,2,1)$ & 15 & 375 & $(298,160,84,1)$ & 16\\
128 & $(384,96,12,2,1)$ & 15 & 376 & $(160,68,19,10)$ & 17\\
129 & $(288,132,12,2,1)$ & 15 & 377 & $(161,68,19,10)$ & 17\\
130 & $(256,160,12,2,1)$ & 15 & 378 & $(162,68,19,10)$ & 17\\
131 & $(288,160,12,2,1)$ & 14 & 379 & $(164,68,19,10)$ & 17\\
132 & $(304,160,12,2,1)$ & 15 & 380 & $(258,68,19,10)$ & 17\\
133 & $(152,80,32,2,1)$ & 14 & 381 & $(261,68,19,10)$ & 16\\
134 & $(156,80,32,2,1)$ & 15 & 382 & $(262,68,19,10)$ & 17\\
135 & $(256,80,32,2,1)$ & 12 & 383 & $(289,68,19,10)$ & 17\\
136 & $(264,80,32,2,1)$ & 14 & 384 & $(68,35,20,10)$ & 17\\
137 & $(272,80,32,2,1)$ & 14 & 385 & $(160,68,20,10)$ & 17\\
138 & $(400,80,32,2,1)$ & 12 & 386 & $(161,68,20,10)$ & 17\\
139 & $(408,80,32,2,1)$ & 14 & 387 & $(162,68,20,10)$ & 17\\
140 & $(256,84,32,2,1)$ & 14 & 388 & $(164,68,20,10)$ & 17\\
141 & $(264,84,32,2,1)$ & 15 & 389 & $(166,68,20,10)$ & 17\\
142 & $(272,84,32,2,1)$ & 14 & 390 & $(167,68,20,10)$ & 17\\
143 & $(400,84,32,2,1)$ & 14 & 391 & $(258,68,20,10)$ & 16\\
144 & $(408,84,32,2,1)$ & 15 & 392 & $(259,68,20,10)$ & 17\\
145 & $(280,160,80,2,1)$ & 14 & 393 & $(262,68,20,10)$ & 17\\
146 & $(280,160,84,2,1)$ & 15 & 394 & $(289,68,20,10)$ & 17\\
147 & $(160,68,16,10,1)$ & 16 & 395 & $(131,96,20,10)$ & 17\\
148 & $(164,68,16,10,1)$ & 15 & 396 & $(133,96,20,10)$ & 17\\
149 & $(256,68,16,10,1)$ & 15 & 397 & $(135,96,20,10)$ & 17\\
150 & $(258,68,16,10,1)$ & 16 & 398 & $(161,96,20,10)$ & 17\\
151 & $(288,68,16,10,1)$ & 15 & 399 & $(164,96,20,10)$ & 17\\
152 & $(164,96,16,10,1)$ & 16 & 400 & $(259,96,20,10)$ & 17\\
153 & $(256,96,16,10,1)$ & 15 & 401 & $(261,96,20,10)$ & 17\\
154 & $(288,96,16,10,1)$ & 14 & 402 & $(391,96,20,10)$ & 17\\
155 & $(290,96,16,10,1)$ & 15 & 403 & $(304,161,68,10)$ & 17\\
156 & $(68,32,20,10,1)$ & 16 & 404 & $(309,162,68,10)$ & 17\\
157 & $(128,32,20,10,1)$ & 15 & 405 & $(279,178,68,10)$ & 16\\
158 & $(192,32,20,10,1)$ & 15 & 406 & $(276,179,68,10)$ & 17\\
159 & $(160,68,20,10,1)$ & 16 & 407 & $(257,163,84,10)$ & 17\\
160 & $(258,68,20,10,1)$ & 16 & 408 & $(262,163,84,10)$ & 17\\
161 & $(288,68,20,10,1)$ & 16 & 409 & $(276,165,96,10)$ & 17\\
162 & $(128,96,20,10,1)$ & 16 & 410 & $(4,2,1)$ & 15\\
163 & $(160,96,20,10,1)$ & 15 & 411 & $(8,2,1)$ & 16\\
164 & $(256,96,20,10,1)$ & 16 & 412 & $(12,2,1)$ & 17\\
165 & $(288,96,20,10,1)$ & 15 & 413 & $(32,2,1)$ & 16\\
166 & $(290,96,20,10,1)$ & 16 & 414 & $(80,2,1)$ & 17\\
167 & $(258,128,20,10,1)$ & 15 & 415 & $(84,2,1)$ & 17\\
168 & $(260,128,20,10,1)$ & 15 & 416 & $(96,2,1)$ & 17\\
169 & $(288,128,20,10,1)$ & 15 & 417 & $(16,10,1)$ & 17\\
170 & $(292,128,20,10,1)$ & 14 & 418 & $(20,10,1)$ & 17\\
171 & $(294,128,20,10,1)$ & 15 & 419 & $(32,10,1)$ & 17\\
172 & $(256,160,20,10,1)$ & 15 & 420 & $(68,10,1)$ & 18\\
173 & $(258,160,20,10,1)$ & 15 & 421 & $(84,10,1)$ & 17\\
174 & $(260,160,20,10,1)$ & 15 & 422 & $(96,10,1)$ & 17\\
175 & $(262,160,20,10,1)$ & 16 & 423 & $(160,10,1)$ & 17\\
176 & $(128,68,32,10,1)$ & 15 & 424 & $(256,10,1)$ & 17\\
177 & $(144,68,32,10,1)$ & 15 & 425 & $(258,10,1)$ & 18\\
178 & $(272,68,32,10,1)$ & 16 & 426 & $(272,10,1)$ & 18\\
179 & $(128,84,32,10,1)$ & 15 & 427 & $(36,16,1)$ & 17\\
180 & $(144,84,32,10,1)$ & 15 & 428 & $(96,16,1)$ & 17\\
181 & $(258,84,32,10,1)$ & 15 & 429 & $(98,16,1)$ & 17\\
182 & $(272,84,32,10,1)$ & 15 & 430 & $(100,16,1)$ & 17\\
183 & $(384,84,32,10,1)$ & 15 & 431 & $(228,16,1)$ & 17\\
184 & $(386,84,32,10,1)$ & 15 & 432 & $(256,16,1)$ & 16\\
185 & $(400,84,32,10,1)$ & 15 & 433 & $(258,16,1)$ & 17\\
186 & $(258,128,32,10,1)$ & 15 & 434 & $(266,16,1)$ & 17\\
187 & $(262,128,32,10,1)$ & 15 & 435 & $(38,20,1)$ & 17\\
188 & $(274,128,32,10,1)$ & 15 & 436 & $(96,20,1)$ & 17\\
189 & $(278,128,32,10,1)$ & 15 & 437 & $(100,20,1)$ & 17\\
190 & $(324,128,32,10,1)$ & 15 & 438 & $(102,20,1)$ & 17\\
191 & $(326,128,32,10,1)$ & 14 & 439 & $(160,20,1)$ & 17\\
192 & $(340,128,32,10,1)$ & 15 & 440 & $(224,20,1)$ & 17\\
193 & $(342,128,32,10,1)$ & 15 & 441 & $(258,20,1)$ & 17\\
194 & $(262,132,32,10,1)$ & 16 & 442 & $(264,20,1)$ & 17\\
195 & $(274,132,32,10,1)$ & 15 & 443 & $(266,20,1)$ & 17\\
196 & $(278,132,32,10,1)$ & 16 & 444 & $(160,84,1)$ & 17\\
197 & $(324,132,32,10,1)$ & 15 & 445 & $(266,84,1)$ & 17\\
198 & $(326,132,32,10,1)$ & 15 & 446 & $(304,160,1)$ & 17\\
199 & $(336,132,32,10,1)$ & 15 & 447 & $(306,160,1)$ & 17\\
200 & $(338,132,32,10,1)$ & 15 & 448 & $(314,160,1)$ & 17\\
201 & $(340,132,32,10,1)$ & 16 & 449 & $(316,160,1)$ & 17\\
202 & $(304,160,68,10,1)$ & 16 & 450 & $(68,19,10)$ & 18\\
203 & $(272,176,68,10,1)$ & 16 & 451 & $(257,19,10)$ & 18\\
204 & $(274,176,68,10,1)$ & 16 & 452 & $(35,20,10)$ & 18\\
205 & $(276,176,68,10,1)$ & 16 & 453 & $(68,20,10)$ & 18\\
206 & $(256,160,84,10,1)$ & 14 & 454 & $(96,20,10)$ & 18\\
207 & $(258,160,84,10,1)$ & 15 & 455 & $(129,20,10)$ & 18\\
208 & $(260,160,84,10,1)$ & 15 & 456 & $(161,20,10)$ & 18\\
209 & $(288,160,84,10,1)$ & 15 & 457 & $(449,20,10)$ & 18\\
210 & $(290,160,84,10,1)$ & 15 & 458 & $(450,20,10)$ & 18\\
211 & $(292,160,84,10,1)$ & 15 & 459 & $(160,68,10)$ & 18\\
212 & $(274,132,96,10,1)$ & 16 & 460 & $(161,68,10)$ & 17\\
213 & $(276,132,96,10,1)$ & 16 & 461 & $(162,68,10)$ & 18\\
214 & $(278,132,96,10,1)$ & 16 & 462 & $(178,68,10)$ & 18\\
215 & $(292,132,96,10,1)$ & 14 & 463 & $(179,68,10)$ & 18\\
216 & $(294,132,96,10,1)$ & 16 & 464 & $(180,68,10)$ & 18\\
217 & $(308,132,96,10,1)$ & 16 & 465 & $(304,68,10)$ & 18\\
218 & $(310,132,96,10,1)$ & 16 & 466 & $(305,68,10)$ & 18\\
219 & $(324,192,36,16,1)$ & 12 & 467 & $(163,84,10)$ & 18\\
220 & $(334,192,36,16,1)$ & 15 & 468 & $(164,84,10)$ & 18\\
221 & $(330,196,36,16,1)$ & 15 & 469 & $(258,84,10)$ & 18\\
222 & $(262,132,96,16,1)$ & 15 & 470 & $(259,84,10)$ & 18\\
223 & $(294,132,96,16,1)$ & 15 & 471 & $(289,84,10)$ & 18\\
224 & $(302,132,96,16,1)$ & 15 & 472 & $(290,84,10)$ & 18\\
225 & $(262,140,96,16,1)$ & 15 & 473 & $(293,84,10)$ & 18\\
226 & $(290,140,96,16,1)$ & 15 & 474 & $(294,84,10)$ & 18\\
227 & $(294,140,96,16,1)$ & 15 & 475 & $(165,96,10)$ & 18\\
228 & $(262,164,96,16,1)$ & 15 & 476 & $(276,96,10)$ & 18\\
229 & $(266,164,96,16,1)$ & 15 & 477 & $(286,163,84)$ & 18\\
230 & $(270,164,96,16,1)$ & 15 & 478 & $(2,1)$ & 17\\
231 & $(256,140,98,16,1)$ & 15 & 479 & $(10,1)$ & 18\\
232 & $(258,140,98,16,1)$ & 15 & 480 & $(16,1)$ & 18\\
233 & $(260,140,98,16,1)$ & 15 & 481 & $(20,1)$ & 18\\
234 & $(264,164,98,16,1)$ & 15 & 482 & $(84,1)$ & 18\\
235 & $(326,136,38,20,1)$ & 16 & 483 & $(160,1)$ & 18\\
236 & $(262,136,96,20,1)$ & 16 & 484 & $(19,10)$ & 18\\
237 & $(266,136,96,20,1)$ & 16 & 485 & $(20,10)$ & 18\\
238 & $(294,136,96,20,1)$ & 16 & 486 & $(68,10)$ & 18\\
239 & $(262,160,68,19,10)$ & 16 & 487 & $(84,10)$ & 18\\
240 & $(261,161,68,19,10)$ & 16 & 488 & $(96,10)$ & 18\\
241 & $(258,162,68,19,10)$ & 16 & 489 & $(258,10)$ & 18\\
242 & $(258,164,68,19,10)$ & 16 & 490 & $(275,10)$ & 18\\
243 & $(131,68,35,20,10)$ & 17 & 491 & $(163,84)$ & 18\\
244 & $(133,68,35,20,10)$ & 16 & 492 & $(1)$ & 19\\
245 & $(262,161,68,20,10)$ & 16 & 493 & $(10)$ & 19\\
246 & $(259,162,68,20,10)$ & 16 & 494 & $(84)$ & 19\\
247 & $(259,135,96,20,10)$ & 17 & 495 & $()$ & 20\\
\end{longtable}
\endgroup

\clearpage
\section{An explicit 23-term witness}\label{app:upper}

For completeness, the following triples of row-major codes give a
length-$23$ decomposition in convention~\eqref{eq:tensor}. They are
the reduction modulo two of the integer scheme
\nolinkurl{gg-333-rank23-rec-0-0-0-z.txt} from the flip-cpd
collection of Khoruzhii, Gel{\ss}, and Pokutta~\cite{flip-cpd}.
Product order and row-major output coordinates are preserved.
Appendix~\ref{app:trajectory-sources} fixes the source version.
This is the witness used by the formal declaration
\code{rank23\_entry\_identity}.

\begin{center}
\begin{tabular}{rrrr@{\qquad}rrrr}
\toprule
$t$ & $A_t$ & $B_t$ & $C_t$ & $t$ & $A_t$ & $B_t$ & $C_t$\\
\midrule
1 & 400 & 304 & 418 & 13 & 284 & 264 & 12\\
2 & 260 & 193 & 10 & 14 & 9 & 67 & 66\\
3 & 73 & 7 & 64 & 15 & 256 & 432 & 130\\
4 & 393 & 12 & 68 & 16 & 146 & 40 & 4\\
5 & 292 & 448 & 8 & 17 & 448 & 4 & 320\\
6 & 8 & 2 & 210 & 18 & 7 & 8 & 5\\
7 & 265 & 72 & 65 & 19 & 32 & 128 & 24\\
8 & 146 & 16 & 2 & 20 & 384 & 52 & 288\\
9 & 269 & 65 & 3 & 21 & 268 & 9 & 9\\
10 & 72 & 2 & 192 & 22 & 56 & 256 & 40\\
11 & 392 & 260 & 36 & 23 & 16 & 16 & 432\\
12 & 144 & 288 & 390 & & & &\\
\bottomrule
\end{tabular}
\end{center}

To check the witness, decode each integer into its nine bits and compute
\[
 \sum_{t=1}^{23} A_t[i,j]B_t[u,v]C_t[w,z]\pmod2.
\]
For every choice of the six indices in $\{0,1,2\}$ this equals
$1$ precisely when $u=j$, $w=i$, and $z=v$, and is zero otherwise.
These are all $729$ tensor coordinates. This finite identity is checked
by reduction in Lean, and it can also be reproduced directly from the
table. Bilinearity then gives~\eqref{eq:algorithm} for all input pairs.
Together with Theorem~\ref{thm:main}, it proves the interval $[21,23]$.

\section{Formal interfaces and verification record}\label{app:formalization}

\subsection{Statements and dependencies}
The formal project uses Lean $4.33.1$ and Mathlib, with dependency
versions fixed in the accompanying source package.
Definitions use the tensor convention~\eqref{eq:tensor}.
The declaration
\code{tensorEntry\_identity\_iff\_bilinearAlgorithm}
proves its equivalence with~\eqref{eq:algorithm} on all inputs.
In the namespace \code{QiushiMatmul}, the two principal statements are
\[
 \begin{aligned}
 \code{rank\_ge\_21}&:\ \code{RankAtLeast 21},\\
 \code{rank\_between\_21\_and\_23}
   &:\ \code{RankAtLeast 21}\ \wedge\
       \code{TensorEntryRankAtMost 23}.
 \end{aligned}
\]
Here \code{RankAtLeast n} quantifies over every length $r$ and
every entrywise decomposition of that length, allowing zero summands,
and asserts $n\leq r$.

The closed definition \code{provedFinitePremises} supplies the line,
all-high plane, and normalized affine-plane bounds to
{\small\nolinkurl{rank_ge_21_of_premises}}.
The plane bounds are {\small\nolinkurl{plane484_rank_ge_19}} and
\code{plane485Gen\_lb19} through \code{plane491Gen\_lb19}.
The first is the named-space version of \code{plane484Gen\_lb19}
at \code{spanCodes [19, 10]}. Their proofs use separately established
source bounds and checked tensor transports.

\subsection{Locating the source bounds}\label{app:source-interfaces}
All paths in this subsection are relative to the accompanying
\nolinkurl{formalization/} directory~\cite{qiushi-lean}.

For the source-bound rules in Section~\ref{sec:source-bounds},
\nolinkurl{contraction_seed_invertible_minor} proves the contraction
bound from an explicit inverse minor; its multi-contraction versions
handle stacked flattenings.
\nolinkurl{weighted_cover_quotient_decomp_false} proves the cover
inequality, and \code{step81o8s2Span\_lb} is the source bound $9$
in~\eqref{eq:three-hyperplane-bound}.
\code{no\_short\_of\_deletions} is
Lemma~\ref{lem:source-substitution}; \code{orbit11\_lb12}
is its worked application to $W_{11}$.
The concrete weighted bound for $W_{450}$ is
\nolinkurl{step108_orbit450_lb18_unconditional}.
The interface \code{generic\_plane\_qra} combines source bounds,
quotient membership, zero-forcing coverage, and an integer-system
exclusion to obtain each plane raise.

The line-cover proof~\eqref{eq:line-cover} is in
\nolinkurl{QiushiLineCoverageFinal.lean}, with its $255$ planes,
labels, and membership tests in \nolinkurl{QiushiLineCoverageDefs.lean}
and the eight coverage chunks it imports. The bound-$17$ inputs are
\code{orbit478\_lb17\_mono} and \code{orbit479\_lb17\_graded},
transported by containment from the bounds for $W_{412}$ and $W_{417}$.
The four bound-$18$ inputs are \code{plane480Gen\_lb18} through
\code{plane483Gen\_lb18}. Combining them gives
\code{line\_rank1\_ge19} in \nolinkurl{QiushiFinalTheorem.lean}.
The rank-two and rank-three line transfers are in
\nolinkurl{QiushiLineFromPlane.lean}: the matrices of codes $25$ in
$W_{484}$ and $106$ in $W_{488}$ are sent to codes $17$ and $273$,
respectively.

The three affine-span proofs are collected in
\nolinkurl{QiushiRemainingPremises.lean} as
\code{affineHyperplane0\_rank\_ge\_17} through
\code{affineHyperplane2\_rank\_ge\_17}.
For $(272,4,2)$, \nolinkurl{QiushiMonoOrbit414From262.lean}
uses the bound in \nolinkurl{QiushiBranch262Extraction.lean}
and the transformation $X\mapsto P^TXQ^{-T}$ with matrix codes
$P=161$, $Q=84$; the preimages of the three generators have codes
$136,1,2$ in $W_{262}$.
For $(273,4,2)$ and $(272,4,1)$, the files
\nolinkurl{QiushiBranch415Extraction.lean} and
\nolinkurl{QiushiBranch416Extraction.lean} connect the quotient
decompositions to the $101$-row and $373$-row integer systems.
Their respective \nolinkurl{Dispatch} modules provide the source
bounds and membership proofs, and the \nolinkurl{NoModel} modules
assemble the branch refutations. The eleven further affine spans are
transported in \nolinkurl{QiushiFinitePremisesReducer.lean}.

\subsection{Catalogue and occupation interfaces}
In the namespace \code{FrozenRegistry}, the theorem
\code{all\_representatives} proves the listed bounds.
The declarations \code{L0\_rank\_sound} and
\code{coverage\_with\_rank\_bound} combine them with orbit coverage
and consistent labels, as in Proposition~\ref{prop:global}.

Supplementary declarations prove the fourteen two-plane orbits,
subspace counts, quotient/restriction equivalence, all
$233{,}680$ row bindings in the eight full occupation systems,
target-$19$ occupation controls, projections of $23$-term
decompositions, and exact singleton capacities.
In the namespace \code{FrozenOccupation},
\code{plane484\_no\_model} through \code{plane491\_no\_model}
exclude arbitrary integer models of~\eqref{eq:integer-system}.
These are separate from the tensor-rank statements and do not identify
occupation feasibility with decomposition existence.

\subsection{Recorded kernel checks}
All $13{,}438$ registered modules passed fresh-source compilation.
The $89$ terminal declarations and their dependencies were then replayed
in the following six groups:
\begin{center}
\begin{tabular}{lrr}
\toprule
Group & Roots & Replayed declarations\\
\midrule
Main theorem and global finite bounds & 6 & $143{,}943$\\
Complete calibration & 7 & $37{,}629$\\
Orbit classification and exact labels & 12 & $33{,}687$\\
Full occupation systems & 8 & $47{,}338$\\
Structural results and consequences & 51 & $36{,}421$\\
Boolean encodings & 5 & $6{,}713$\\
\bottomrule
\end{tabular}
\end{center}
All six groups passed. Their dependency closures overlap, so the
right-hand column is not additive.
Each run used the official
\code{Lean.Environment.replay} from an empty kernel environment,
with constructor and recursor checks~\cite{lean-replay}.
Root types and transitive axioms were also audited.
These executions recheck the proof using Lean's kernel; they do not
constitute a second implementation of the kernel.

\subsection{Certificate realizations}\label{app:certificate-realizations}
The main formal theorem uses integer branch certificates with
proved quotient semantics. In the Boolean realization, a singleton
cap $c_q$ is encoded by $x_q=\sum_{j=1}^{c_q}y_{qj}$, where the
$y_{qj}$ are Boolean copies. Every legal integer count is represented
by choosing exactly $x_q$ true copies; conversely, summing the copies
recovers the count. Each occupation row becomes a cardinality
constraint. Sequential counters~\cite{sinz} with disjoint
auxiliary-variable sets encode all rows simultaneously, including
the two inequalities imposing the total. Thus the integer system
and its Boolean encoding are equisatisfiable.

Seven quotients have only singleton caps $0$ and $1$; orbit $489$
has one direction with cap $2$, represented by two copies. The Lean
development proves the generic Boolean-copy and disjoint-counter
semantics as well as the concrete integer exclusions. The historical
CNF/DRAT certificates give a separate computational realization of
the eight exclusions: their exact DIMACS and DRAT files are checked
by the original proof package, whereas Lean checks the integer
branch certificates. Feasible target-$19$ occupation vectors and
projected $23$-term controls check consistency of the constraints;
they do not construct rank-$19$ quotient decompositions.

The
\href{https://github.com/Oxelra-AI/Qiushi-Engine-Matmul-Research/blob/lean-formalization/formalization/README.md}{build instructions}
and
\href{https://github.com/Oxelra-AI/Qiushi-Engine-Matmul-Research/blob/lean-formalization/formalization/verification-results.json}{verification record}
for the proof library~\cite{qiushi-lean} identify the same sources,
certificate data, and dependencies.
The manuscript's LaTeX package includes the finite catalogue and
verification summary, and can be built independently of the
formalization.

\subsection{Sources for the research trajectory}\label{app:trajectory-sources}
The following links fix the research materials underlying
Section~\ref{sec:development} to a single repository revision.
The manuscript data file \nolinkurl{data/artifact-sources.json}
also lists the upstream schemes and their coordinate conversions.

The numerical example uses the
\href{https://github.com/Oxelra-AI/Qiushi-Engine-Matmul-Research/blob/01b4a6baf3788bf3f14af98a653ea22dafa0691c/research/materials/exact_baselines/results/fmm_r23_schemes/puiseux_analysis.json}{pairing singular values}
and the
\href{https://github.com/Oxelra-AI/Qiushi-Engine-Matmul-Research/blob/01b4a6baf3788bf3f14af98a653ea22dafa0691c/research/materials/exact_baselines/results/fmm_r23_schemes/cancellation_analysis.json}{absorbed-factor data}.
The latter normalize a pairing column by a nonzero coordinate;
Table~\ref{tab:continuation} converts to unit Euclidean norm before
applying~\eqref{eq:deletion-diagnostic}. The target residuals are
the recorded \nolinkurl{solve.brent_residual_norm} values at
$t=10,20,60$ in the
\href{https://github.com/Oxelra-AI/Qiushi-Engine-Matmul-Research/blob/01b4a6baf3788bf3f14af98a653ea22dafa0691c/research/materials/exact_baselines/results/fmm_r23_schemes/serendipitous_8d34_continuation_ranked0_further.json}{saved continuation states},
which also contain all $621$ coefficients at each point.
The
\href{https://github.com/Oxelra-AI/Qiushi-Engine-Matmul-Research/blob/01b4a6baf3788bf3f14af98a653ea22dafa0691c/research/materials/deformation_and_incidence/notes/corrected_cancellation.md}{corrected cancellation analysis}
records the change in interpretation. The manuscript source includes
the selected values and the arithmetic used for the displayed table.

The continuation starts from the
\href{https://github.com/dronperminov/FastMatrixMultiplication/blob/e703f18cbc157cec8ef68cc60bc9de0a149661ae/schemes/results/serendipitous_base/3x3x3_m23_8d34d377660f8f8d8b32cd4b6a1e1c40a5093dd0_ZT.json}{integer scheme in Perminov's collection}~\cite{perminov-code}.
The first two coefficient vectors are retained; the third is
reindexed by $w^{\rm new}_{3i+k}=w^{\rm source}_{3k+i}$.
The
\href{https://github.com/Oxelra-AI/Qiushi-Engine-Matmul-Research/blob/01b4a6baf3788bf3f14af98a653ea22dafa0691c/research/materials/exact_baselines/code/fmm_to_qmm.py}{conversion program}
and the
\href{https://github.com/Oxelra-AI/Qiushi-Engine-Matmul-Research/blob/01b4a6baf3788bf3f14af98a653ea22dafa0691c/research/materials/exact_baselines/results/fmm_r23_schemes/serendipitous_8d34.qmm}{converted scheme}
specify the exact starting point. Appendix~\ref{app:upper} uses a
different construction, the
\href{https://github.com/khoruzhii/flip-cpd/blob/9eeb17f487f14103d7128cf290ae61535d8983b1/data/schemes_paper/gg-333-rank23-rec-0-0-0-z.txt}{general $3\times3$ integer scheme in flip-cpd}~\cite{flip-cpd}.
Reducing its coefficients modulo two and reading
$a_1,\ldots,a_9$, $b_1,\ldots,b_9$, and $c_1,\ldots,c_9$
in row-major order gives exactly the printed triples.

The
\href{https://github.com/Oxelra-AI/Qiushi-Engine-Matmul-Research/blob/01b4a6baf3788bf3f14af98a653ea22dafa0691c/research/materials/quotient_cores/notes/core_bridge_review.md}{core-bridge analysis}
explains the relation between quotient construction and full-tensor
bounds. The
\href{https://github.com/Oxelra-AI/Qiushi-Engine-Matmul-Research/blob/01b4a6baf3788bf3f14af98a653ea22dafa0691c/research/materials/finite_certification/notes/seqcounter_error.md}{counter correction}
isolates the incompatible auxiliary-variable allocation and checks
the corrected cardinality constraint. The
\href{https://github.com/Oxelra-AI/Qiushi-Engine-Matmul-Research/blob/01b4a6baf3788bf3f14af98a653ea22dafa0691c/research/materials/structural_obstruction/notes/short_saturated_lower_bound_route.md}{shortened saturation proof}
records the replacement of the terminal support exclusions by saturation
and the removal of the proposed orbit-$479$ strengthening from the
coset reduction. These notes describe research states at
the time they were written; the theorem and its final dependencies are
given in the present paper and formalization.

\bibliographystyle{unsrturl}
\bibliography{references}
\end{document}